\title{Faster Algorithms for Mean-Payoff Parity Games}
\author[1]{Krishnendu Chatterjee}
\author[2]{Monika Henzinger}
\author[3]{Alexander Svozil}
\affil[1]{IST Austria\\
  \texttt{krish.chat@ist.ac.at}}
\affil[2]{University of Vienna, Faculty of Computer Science, Vienna, Austria\\
  \texttt{monika.henzinger@univie.ac.at}}
\affil[3]{University of Vienna, Faculty of Computer Science, Vienna, Austria\\
  \texttt{alexander.svozil@univie.ac.at}}
\authorrunning{K. Chatterjee, M. Henzinger and A. Svozil} 
\renewcommand{\O}{\mathcal{O}}
\newcommand{\In}{\mathit{In}}
\newcommand{\Out}{\mathit{Out}}
\newcommand{\val}{\mathit{val}}
\newcommand{\V}{\mathcal{\nu}}
\newcommand{\N}{\mathbb{N}}
\newcommand{\Z}{\mathbb{Z}}
\newcommand{\Q}{\mathbb{Q}}
\newcommand{\Parity}{\mathit{Parity}}
\newcommand{\MeanPayoff}{\mathit{MeanPayoff}}
\newcommand{\MP}{\mathit{MP}}
\newcommand{\MPP}{\mathit{MPP}}
\newcommand{\Inf}{\mathit{Inf}}
\newcommand{\Plays}{\mathit{Plays}}
\newcommand{\restr}{\upharpoonright}
\newcommand{\outc}{\mathit{outcome}}
\newcommand{\attr}{\mathit{Attr}}
\newcommand{\lift}{\mathit{lift}}
\newcommand{\solvempp}{\mathit{SolveMeanPayoffParity}}
\begin{document}

\maketitle

\begin{abstract}
Graph games provide the foundation for modeling and synthesis of reactive processes. 
Such games are played over graphs where the vertices are controlled by two adversarial players.
We consider graph games where the objective of the first player is the 
conjunction of a qualitative objective (specified as a parity condition) 
and a quantitative objective (specified as a mean-payoff condition).
There are two variants of the problem, namely, the {\em threshold} problem where the 
quantitative goal is to ensure that the mean-payoff value is above a threshold, 
and the {\em value} problem where the quantitative goal is to ensure the optimal
mean-payoff value; in both cases ensuring the qualitative parity objective. 
The previous best-known algorithms for game graphs with $n$ vertices, $m$ edges,
parity objectives with $d$ priorities, and maximal absolute reward value $W$ 
for mean-payoff objectives, are as follows:
$\O(n^{d+1} \cdot m \cdot W)$ for the threshold problem, and 
$\O(n^{d+2} \cdot m \cdot W)$ for the value problem. 
Our main contributions are faster algorithms, and the running times of our algorithms 
are as follows: $\O(n^{d-1} \cdot m \cdot W)$ for the threshold problem, 
and $\O(n^{d} \cdot m \cdot W \cdot \log (n\cdot W))$ for the value problem. 
For mean-payoff parity objectives with two priorities, our algorithms match the best-known 
bounds of the algorithms for mean-payoff games (without conjunction with parity objectives).
Our results are relevant in synthesis of reactive systems with both functional 
requirement (given as a qualitative objective) and performance requirement 
(given as a quantitative objective).
\end{abstract}

\section{Introduction}
\noindent{\em Graph games.}
A graph game is played on a finite directed graph with two players, namely, 
player~1 and player~2 (the adversary of player~1).
The vertex set is partitioned into player-1 and player-2 vertices.
At player-1 vertices, player~1 chooses a successor vertex;
and at player-2 vertices, player~2 
does likewise.
The result of playing the game forever is an infinite path through the graph.
There has been a long history of using graph games for modeling and 
synthesizing reactive processes~\cite{BuchiLandweber,PnueliRosner,RamadgeWonham}:
a reactive system and its environment represent the two players, whose states 
and transitions are specified by the vertices and edges of a game graph.
Consequently, graph games provide the theoretical foundation
for modeling and synthesizing reactive processes.

\smallskip\noindent{\em Qualitative and quantitative objectives.}
For reactive systems, the objective is given as a set of desired paths 
(such as $\omega$-regular specifications), or 
as a quantitative optimization objective with a payoff function on the paths.
The class of $\omega$-regular specifications provide a robust framework
to express all commonly used specifications for reactive systems in 
verification and synthesis. 
Parity objectives are a canonical way to express $\omega$-regular objectives~\cite{Thomas97},
where an integer priority is assigned to every vertex, and a path satisfies the 
parity objective for player~1 if the minimum priority visited infinitely often is even.
One of the classical and most well-studied quantitative objectives is the mean-payoff 
objective, where a reward is associated with every edge, and the payoff of a path is 
the long-run average of the rewards of the path.

\smallskip\noindent{\em Mean-payoff parity objectives.}
Traditionally the verification and the synthesis problems were considered
with qualitative objectives.
However, recently combinations of qualitative and quantitative objectives have 
received a lot of attention.
Qualitative objectives such as $\omega$-regular objectives specify
the functional requirements of reactive systems, whereas the quantitative 
objectives specify resource consumption requirements (such as for embedded 
systems or power-limited systems).
Combining quantitative and qualitative objectives is crucial in the 
design of reactive systems with both resource constraints and functional 
requirements~\cite{CAHS03,CJH05,BFLMS08,BCHJ09}.
For example, mean-payoff parity objectives are relevant in synthesis of 
optimal performance lock-synchronization for programs~\cite{CCHRS11}, 
where one player is the synchronizer, the opponent is 
the environment; the performance criteria is specified as mean-payoff objective; 
and the functional requirement (e.g., data-race freedom or liveness) 
as an $\omega$-regular objective.
Mean-payoff parity objectives have been used in several other applications,
e.g., define permissivity for parity games~\cite{BMOU11} and 
robustness in synthesis~\cite{BCGHHJKK14}.

\smallskip\noindent{\em Threshold and value problems.}
For graph games with mean-payoff and parity objectives there are two variants 
of the problem.
First, the {\em threshold} problem, where a threshold $\V$ is given for 
the mean-payoff objective, and player~1 must ensure the parity objective and 
that the mean-payoff is at least $\V$.
Second, the {\em value} problem, where player~1 maximizes the mean-payoff 
value while ensuring the parity objective.
In the sequel of this section, we will refer to graph games with mean-payoff and
parity objectives as mean-payoff parity games.

\smallskip\noindent{\em Previous results.}
Mean-payoff parity games were first studied in~\cite{CJH05},
and algorithms for the value problem were presented.
It was shown in~\cite{CD10a} that the decision problem for mean-payoff parity games
lies in NP $\cap$ coNP (similar to the status of mean-payoff games and parity games).
For game graphs with $n$ vertices, $m$ edges, parity objectives with $d$ priorities, 
and maximal absolute reward value $W$ for the mean-payoff objective, the previous 
known algorithmic bounds for mean-payoff parity games are as follows:
For the threshold problem the results of~\cite{CD10a} give an 
$\O(n^{d+4}\cdot m\cdot d\cdot W)$-time algorithm.
This algorithmic bound was improved in~\cite{BMOU11} where an $\O(n^{d+2}\cdot m\cdot W)$-time 
algorithm was presented for the value problem. 
The result of~\cite{BMOU11} does not explicitly present any other better bound for the 
threshold problem. However, the recursive algorithm of~\cite{BMOU11} uses value mean-payoff games
as a sub-routine, and replacing value mean-payoff games with threshold mean-payoff games gives 
an $\O(n)$-factor saving, and yields an $\O(n^{d+1}\cdot m\cdot W)$-time algorithm for the 
threshold problem for mean-payoff parity games.

\smallskip\noindent{\em Contributions.} 
In this work our main contributions are faster algorithms to solve mean-payoff parity games.
Previous and our results are summarized in Table~\ref{tab:complexity}.

\begin{enumerate}
\item {\em Threshold problem.} We present an $\O(n^{d-1}\cdot m \cdot W)$-time algorithm
for the threshold problem for mean-payoff parity games, improving the previous 
$\O(n^{d+1}\cdot m \cdot W)$ bound.
The important special case of parity objectives with two priorities correspond to B\"uchi 
and coB\"uchi objectives. 
Our bound for mean-payoff B\"uchi games and mean-payoff coB\"uchi games is 
$\O(n \cdot m \cdot W)$, which matches the best-known bound to solve the threshold
problem for mean-payoff objectives~\cite{brim2011}, and improves the previous known $\O(n^3 \cdot m \cdot W)$
bound~\cite{BMOU11}.

\item {\em Value problem.} We present an $\O(n^{d}\cdot m\cdot W \cdot \log (n\cdot W))$-time 
algorithm for the value problem for mean-payoff parity games, improving the previous 
$\O(n^{d+2}\cdot m \cdot W)$ bound.
Our bound for mean-payoff B\"uchi games and mean-payoff coB\"uchi games is 
$\O(n^2 \cdot m \cdot W \cdot \log(n\cdot W))$, which matches the bound of~\cite{brim2011} 
to solve the value problem for mean-payoff objectives, and improves the previous 
known $\O(n^4 \cdot m \cdot W)$ bound.
\end{enumerate}

\smallskip\noindent{\em Technical contributions.}
Our main technical contributions are as follows:
\begin{enumerate}
\item First, for the threshold problem, we present a decremental algorithm for 
mean-payoff games that supports a sequence of vertex-set deletions along with 
their player-2 reachability set. 
We show that the total running time is $\O(n \cdot m \cdot W)$, which matches the 
best-known bound for the static algorithm to solve mean-payoff games.
We show that using our decremental algorithm we can solve the threshold problem for 
mean-payoff B\"uchi games in time $\O(n \cdot m \cdot W)$.

\item Second, for mean-payoff coB\"uchi games, the decremental approach does not work.
We present a new static algorithm for threshold mean-payoff games that identifies subsets $X$ 
of the winning set for player~1, where the time complexity is $\O(|X| \cdot m \cdot W)$, i.e., 
it replaces $n$ with the size of the set identified. We show that with our new static algorithm 
we can solve the threshold problem for mean-payoff coB\"uchi games in time $\O(n \cdot m \cdot W)$. 

\item Finally, we show for all mean-payoff parity objectives, given an algorithm for the 
threshold problem, the value problem can be solved in time $n \cdot \log (n\cdot W)$
times the complexity of the threshold problem. 
\end{enumerate}

\smallskip\noindent{\em Related works.}
The problem of graph games with mean-payoff parity objectives was first 
studied in~\cite{CJH05}. The NP $\cap$ coNP complexity bound was established 
in~\cite{CD10a}, and an improved algorithm for the problem was given 
in~\cite{BMOU11}.
The mean-payoff parity objectives has also been considered in other
stochastic setting such as Markov decision processes~\cite{CD11,CD11b} 
and stochastic games~\cite{CDGO14}.
The algorithmic approaches for stochastic games build on the results for
non-stochastic games. In this work, we present faster algorithms for 
mean-payoff parity games.

{
\begin{table}
\begin{center}
\begin{adjustbox}{max width=\textwidth}
\begin{tabular}{|l|}
\multicolumn{1}{l}{\phantom{horizon}} \\
\hline
\ MP-B\"uchi \ \\
\hline
\ MP-coB\"uchi \  \\
\hline
\ MP-parity\ \\
\hline
\end{tabular}
\begin{tabular}{|c|c|}

\multicolumn{2}{c}{threshold problem}\\
\hline
 Previous & Our  \\
\hline
$\O(n^3 \cdot m \cdot W)$ & $\O(n \cdot m \cdot W)$ \\
\hline
$\O(n^3 \cdot m \cdot W)$ & $\O(n \cdot m \cdot W)$ \\
\hline
$\O(n^{d+1} \cdot m \cdot W)$ & $\O(n^{d-1} \cdot m \cdot W)$ \\
\hline
\end{tabular}

\begin{tabular}{|c|c|}
\multicolumn{2}{c}{value problem}\\
\hline
 Previous & Our  \\
\hline
$\O(n^4 \cdot m \cdot W)$ & $\O(n^2 \cdot m \cdot W \cdot \log(nW))$ \\
\hline
$\O(n^4 \cdot m \cdot W)$ & $\O(n^2 \cdot m \cdot W \cdot \log(nW))$ \\
\hline
$\O(n^{d+2} \cdot m \cdot W)$ & $\O(n^{d} \cdot m \cdot W \cdot \log(nW))$ \\
\hline
\end{tabular}
\end{adjustbox}
\caption{Algorithmic bounds for mean-payoff (MP) and parity objectives, and special cases: 
threshold problem  (left) and value problem (right).
\label{tab:complexity}}
\end{center}
\vspace{-1em}
\end{table}
}

\section{Preliminaries}

\noindent{\emph{Graphs.}
A graph $G = (V, E)$ consists of a finite set $V$ of vertices and a finite set of edges 
$E \subseteq V\times V$. 
Given a graph $G = (V,E)$ and a subset $U \subseteq V$ we denote by $G\restr U = (V',E')$ 
the subgraph of $G$ induced by $U$, i.e., $V' = U$, $E' = (U \times U) \cap E$.
For $v\in V$ we denote by $\In(v)$ (resp., $\Out(v)$) the set of incoming (resp., outgoing) vertices, i.e., 
$\In(v)=\{v' \mid (v',v) \in E \}$, and $\Out(v)=\{v' \mid (v,v') \in E \}$.

\smallskip\noindent{\em Game graphs.}
A game graph $\Gamma = (V, E, \langle V_1, V_2 \rangle)$ is a graph whose 
vertex set is partitioned into $V_1$ and $V_2$, (i.e., $V = V_1 \cup V_2$ and $V_1 \cap V_2=\emptyset$). 
In a game graph every vertex $v \in V$ has a successor $v' \in V$, i.e.,
$\Out(v)\neq \emptyset$ for all $v \in V$. 
Given a game graph $\Gamma$ and a set $U$ such that for all vertices $u$ in $U$ we have 
$\Out(u) \cap U \neq \emptyset$, we denote by $\Gamma \restr U$ the subgame induced by $U$.

\smallskip\noindent{\emph{Plays.}
Given a game graph $\Gamma$ and a starting vertex $v_0$, 
the game proceeds in rounds.
In each round, if the current vertex belongs to player~1, then player~1 chooses
a successor vertex, and player~2 does likewise if the current vertex belongs to player~2.
The result is a {\em play} $\rho$ which is an infinite path from $v_0$, i.e., $\rho = v_0v_1\dots$, 
where every $(v_i, v_{i+1}) \in E$ for all $i\geq 0$.
We denote by $\Plays(\Gamma)$ the set of all plays of the game graph.

\smallskip\noindent{\em Strategies.}
Strategies are recipes to extend prefixes of plays by choosing the next vertex.
Formally, a strategy for player-1 is a function $\sigma_1:V^*\cdot V_1 \mapsto V$
such that $(v,\sigma_1(\rho \cdot v)) \in E$ for all $v \in V_1$ and all $\rho \in V^*$. 
We define strategies $\sigma_2$ for player~2 analogously.
We denote by $\Sigma_1$ and $\Sigma_2$ the set of all strategies for player~1 and player~2,
respectively.
Given  strategies $\sigma_1$ and $\sigma_2$ for player~1 and player~2, and a starting 
vertex $v_0$, there is a unique play $\rho=v_0 v_1 \ldots$ such that for all $i \geq 0$,
(a)~if $v_i \in V_1$ then $v_{i+1}=\sigma_1(v_0 \dots v_i)$; and 
(b)~if $v_i \in V_2$ then $v_{i+1}=\sigma_2(v_0 \dots v_i)$.
We denote the unique play as $\outc(v_0,\sigma_1,\sigma_2)$.
A strategy is {\em memoryless} if it is independent of the past and depends only on 
the current vertex, and hence can be defined as a function $\sigma_1: V_1 \mapsto V$
and $\sigma_2: V_2 \mapsto V$, respectively.

\smallskip\noindent{\emph{Objectives and parity objectives.}
An objective for a game graph $\Gamma$ is a subset of the possible plays, i.e.,
   $\phi \subseteq \Plays(\Gamma)$.
Given a play $\rho$ we denote by $\Inf(\rho)$ the set of vertices that appear 
infinitely often in $\rho$. 
A {\em parity} objective is defined with a priority function $p$ that maps every vertex to a non-negative
integer priority, and a play satisfies the parity objective for player~1 if the minimum priority 
vertex that appear infinitely often is even. 
Formally, the parity objective is $\Parity_\Gamma(p) = \{ \rho \in \Plays(\Gamma) \mid \min \{p(v) \mid 
v \in \Inf(\rho)\} \text{ is even} \}$. 
The Büchi and coBüchi objectives are special cases of parity objectives with two priorities only.
We have $p: V \mapsto \{0, 1\}$ for Büchi objectives and $p: V \mapsto \{1,2\}$ for the coBüchi objectives.

\smallskip\noindent{\em Payoff functions.} 
Consider a game graph $\Gamma$, and a weight function $w: E \mapsto \Z$ that maps every edge to
an integer.
The mean-payoff function maps every play to a real-number and is defined as follows:
For a play $\rho = v_0v_1\dots$ in $\Plays(\Gamma)$ we have $\MP(w,\rho) = \liminf\limits_{n\mapsto \infty} \frac{1}{n} \cdot \sum_{i=0}^{n-1} w(v_i,v_{i+1})$.
The mean-payoff parity function also maps every play to a real-number or $-\infty$ as follows:
if the parity objective is satisfied, then the value is the mean-payoff value, else it is $-\infty$.
Formally, for a play $\rho$, we have 
\[
\MPP_{\Gamma}(w,p,\rho)=\begin{cases}
\MP_{\Gamma}(w,\rho) & \text{ if } \rho \in \Parity_\Gamma(p); \\
- \infty & \text{ if }  \rho \not\in \Parity_\Gamma(p).
\end{cases}
\]

\smallskip\noindent{\em Threshold mean-payoff parity objectives.}
Given a threshold $\V \in \Q$, the {\em threshold mean-payoff objective} 
$\MeanPayoff_\Gamma(\V) = \{ \rho \in \Plays(\Gamma) \mid \MP(\rho) \geq \V \}$ 
requires that the mean-payoff value is at least $\V$.
The {\em threshold mean-payoff parity objective} is a conjunction of a parity 
objective and a threshold mean-payoff objective, i.e., $\Parity_\Gamma(p) \cap \MeanPayoff_\Gamma(\V)$.

\smallskip\noindent{\emph{Winning strategies.}
Given an objective (such as parity, threshold mean-payoff, or threshold mean-payoff parity) $\phi$,
a vertex $v$ is winning for player~1, if there is a strategy $\sigma_1$ such that for all
strategies $\sigma_2$ of player~2, the play $\outc(v,\sigma_1,\sigma_2) \in \phi$ (i.e., the play
satisfies the objective).
We denote by $W_1(\phi)$ the set of winning vertices (or the winning region) for player~1 for 
the objective $\phi$.
The notation $W_2(\overline{\phi})$ for complementary objectives $\overline{\phi}$ 
for player~2 is similar.

\smallskip\noindent{\em Value functions.}
Given a payoff function $f$ (such as the mean-payoff function, or the mean-payoff parity function),
the value for player~1 is the maximal payoff that she can guarantee against all strategies of 
player~2.
Formally, 
\[
\val_{\Gamma}(f)(v)= \sup_{\sigma_1 \in \Sigma_1} \inf_{\sigma_2 \in \Sigma_2} f(\outc(v,\sigma_1,\sigma_2)). 
\]

\smallskip\noindent{\emph{Attractors.}} 
The player-1 attractor $\attr_1(S)$ of a given set $S \subseteq V$ is the set of vertices
from which player-1 can force to reach a vertex in $S$. It is defined as the limit 
of the sequence $A_0 = S; 
A_{i+1} = A_i \cup \{ v \in V_1 \mid \Out(v) \cap A_i \neq \emptyset \} \cup \{v \in V_2 \mid \Out(v) \subseteq A_i\}$ for all $i \geq 0$. 
Th Player-2 attractor $\attr_2(S)$ is defined analogously exchanging the roles of player~1 and player~2.
The complement of an attractor induces a game graph, as in the complement every vertex has an outgoing
edge in the complement set.

\smallskip\noindent{\em Relevant parameters.}
In this work we will consider computing the winning region for threshold mean-payoff parity objectives,
and the value function for mean-payoff parity objectives.
We will consider the following relevant parameters:
$n$ denotes the number of vertices, $m$ denotes the number of edges, $d$ denotes the number of 
priorities of the parity function $p$, and $W$ is the maximum absolute value of the weight function $w$.

\section{Decremental Algorithm for Threshold Mean-Payoff Games}\label{sec:decr}
In this section we present a decremental algorithm for threshold mean-payoff games
that supports deleting a sequence of sets of vertices along with their player-2
attractors.
The overall running time of the algorithm is $\O(n \cdot m \cdot W)$.

\smallskip\noindent{\em Key idea.} 
A static algorithm based on the notion of progress measure for mean-payoff games 
was presented in~\cite{brim2011}.
We show that the progress measure is monotonic wrt to the deletion of vertices and their
player-2 attractors.
We use an amortized analysis to obtain the running time of our algorithm.

\smallskip\noindent{\em Mean-payoff progress measure.} 
Let $\Gamma$ be a mean-payoff game with threshold $\V$.
Progress measure is a function $f$ which maps every vertex in $\Gamma$ to an element of the set 
$C_\Gamma = \{i \in \N \mid i \leq nW \} \cup \{\top\}$, i.e., $f: V \mapsto C_\Gamma$. 
Let $(\preceq,C_\Gamma)$ be a total order, where $x \preceq y$ for $x,y \in C_\Gamma$
holds iff $x \leq y \leq nW$ or $y = \top$. We define the operation $\ominus:
C_\Gamma \times \Z \mapsto C_\Gamma$ for all $a \in C_\Gamma$ and $b \in \Z$ as
follows:
\[ 
a \ominus b = \begin{cases} 
\max(0,a-b) & \text{if $a \neq \top$ and $a-b \leq nW$,}\\
\top        & \text{otherwise.}
\end{cases}
\]
A player-1 vertex $v$ is \emph{consistent} if $f(v) \succeq f(v') \ominus
w(v,v')$ for \emph{any} $v' \in \Out(v)$. A player-2 vertex $v$ is \emph{consistent} 
if $f(v) \succeq f(v') \ominus w(v,v')$ for \emph{all} $v' \in \Out(v)$.
Let $v \in V$ then $\lift(\cdot,v) : [V \mapsto C_\Gamma] \mapsto [V \mapsto C_\Gamma]$ is defined by
$\lift(f,v) = g$ where:
\[ g(u) = 
\begin{cases}
f(u) & \text{ if $u \neq v$},\\
\min\{f(v') \ominus w(v,v') \mid (v,v') \in E \} & \text{ if $u = v \in V_1$},\\
\max\{f(v') \ominus w(v,v') \mid (v,v') \in E \} & \text{ if $u = v \in V_2$}.
\end{cases}
\]

\smallskip\noindent{\em Static Algorithm}
The static algorithm in~\cite{brim2011} is an iterative algorithm which
maintains and returns a progress measure $f$ and a list $L$ of vertices which are not consistent. 
The initial progress measure of every vertex is set to zero.
Also, $w(e)$ is set to $w(e) - \V$ for all edges $e$ in $E$.
The list $L$ is initialized with the vertices which are not consistent considering the 
initial progress measure. 
Then the following steps are executed in a while-loop:
\begin{enumerate}
\item Take out a vertex $v$ of $L$.
\item Perform the $\lift$-operation on the vertex, i.e., $f \leftarrow \lift(f,v)$. 
\item If a vertex $v'$ in $\In(v)$ is not consistent, put $v'$ into $L$.
\item If $L$ is empty, return $f$ else proceed to the next iteration.
\end{enumerate}
If every vertex is consistent, i.e., the list $L$ is empty, 
the winning region of player~1 is the set of vertices which are not set
to $\top$ in $f$, i.e., $W_1(\V) = \{ v \in V \mid f(v) \neq \top \}$.

\smallskip\noindent{\em Decremental input/output.} 
Let $\Gamma$ be a mean-payoff game with threshold $\V$. 
The \emph{input} to the decremental algorithm is a sequence of sets $A_1, A_2, \ldots,A_k$,
such that each $A_i$ is a player-2 attractor of a set $X_i$ in the game 
$\Gamma_i=\Gamma \restr (V \setminus \bigcup_{j<i} A_j)$.
The \emph{output requirement} is the player-1 winning set after the deletion of
$\bigcup_{j<i}A_j$ for $i=1,\dots,k$,
i.e., the output requirement is the sequence $Z_1,Z_2, \ldots,Z_k$, where 
$Z_i=W_1(\phi)$ in $\Gamma_i=\Gamma \restr (V \setminus \bigcup_{j<i} A_j)$,
where $\phi=\MeanPayoff_{\Gamma_i}(\V)$ is the threshold mean-payoff objective.
In other words, we repeatedly delete a vertex set $X_i$ along with its player-2 attractor $A_i$ 
from the current game graph $\Gamma_i$, and require the winning set for player~1 as an output after
each deletion.

\smallskip\noindent{\em Decremental algorithm.} 
We maintain a progress measure $f_i$, $1 \leq i \leq k$, during the whole sequence of deletions.
The initial progress measure $f_1$ for the mean-payoff game $\Gamma$ with threshold 
mean-payoff objective $\phi$ is calculated using the static algorithm.
For all edges $e$ in $E$, we set $w(e) = w(e) - \V$.
In iteration $i$ with 
input $A_i$, in the game $\Gamma_i$ with its corresponding vertex set $V_i$ the following steps are executed:
\begin{enumerate}
    \item If a vertex in the set $\{v \in V_i\setminus A_i \mid \exists v': \  v' \in \Out(v) \land v' \in A_i \}$
    is not consistent in $f_i$ without the set $A_i$, put it in a list $L_i$.
    \item Delete the set $A_i$ from $\Gamma_i$ to receive $\Gamma_{i+1}$ (and thus $V_{i+1}$).
    \item Execute steps (1)-(4) of the  above described iterative 
	algorithm from~\cite{brim2011} initialized with
    $\Gamma_{i+1}$, $L_i$ and $f_{i}$ restricted to the vertices in $V_{i+1}$.
    \item Finally the winning region of player~1 can be extracted from the obtained
	progress measure $f_{i+1}$, i.e., $W_1(\phi) = \{ v \in V_{i+1} \mid f(v) \neq \top \}$.
\end{enumerate}

\smallskip\noindent\emph{Correctness.}
Let $\Gamma$ be a game graph, $\phi$ a threshold objective and $A_1, A_2 ,\dots, A_k$ a sequence of sets, 
such that each $A_i$ is a player-2 attractor in the game 
$\Gamma_i = \Gamma \restr (V \setminus \bigcup_{j<i} A_j)$. To show the correctness of the decremental
algorithm we need to show that the condition that the list $L$ contains all vertices which are not consistent
is an invariant of the decremental algorithm at line~3. This property was proved for the static algorithm in~\cite{brim2011}.
\begin{lemma}
The condition that $L_i$ contains all vertices which are not consistent with the
progress measure $f_i$ restricted to $V_{i+1}$ in $\Gamma_{i+1}$ is an invariant of the
static algorithm called in step~3 of the decremental algorithm for $1 \leq i
\leq k-1$. 
\end{lemma}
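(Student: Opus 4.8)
The plan is to verify the stated condition as an \emph{invariant} of the static loop in two parts: first that it holds at the moment the loop is entered (the genuinely new content), and then that it is preserved by each individual lift-iteration (which is exactly the property already established for the static algorithm in~\cite{brim2011}). Since step~3 starts the loop with the list $L_i$ and the progress measure $f_i \restr V_{i+1}$, and step~3 then runs literally the static loop, the preservation part is inherited verbatim; so the only thing left to prove is that $L_i$, as assembled in step~1, already contains \emph{every} vertex of $V_{i+1}$ that is inconsistent with $f_i \restr V_{i+1}$ in $\Gamma_{i+1}$.

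First I would record the precondition that every vertex of $V_i$ is consistent with $f_i$ in $\Gamma_i$. For $i=1$ this holds because $f_1$ is returned by the static algorithm with an empty list, and for $i \geq 2$ it holds because the static call of iteration $i-1$ terminated with an empty list, so by the invariant of that call all vertices of $V_i$ are consistent. This is the mild induction on $i$ that underlies the whole decremental procedure.

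The heart of the argument is a monotonicity analysis of how deleting the player-2 attractor $A_i$ can affect consistency, split on the owner of a vertex $v \in V_{i+1} = V_i \setminus A_i$. For a player-2 vertex the crucial point is that $A_i$ is a \emph{player-2} attractor: by definition any $v \in V_2$ with $\Out(v) \cap A_i \neq \emptyset$ already lies in $A_i$, so no player-2 vertex surviving in $V_{i+1}$ has an edge into $A_i$; its out-neighbourhood is unchanged and it stays consistent. (Even disregarding the attractor structure, deleting successors can only lower the $\max$ in the player-2 consistency test, so consistency is preserved.) A player-1 vertex with no edge into $A_i$ likewise keeps its entire out-neighbourhood and stays consistent. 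The only remaining case is a player-1 vertex $v$ with $\Out(v) \cap A_i \neq \emptyset$: deletion removes terms from the $\min$ defining $\lift(f_i,v)$, so the lifted value can only rise and $v$ may turn inconsistent. These are exactly the vertices of $\{v \in V_i \setminus A_i \mid \exists v' \in \Out(v) \cap A_i\}$ that step~1 inspects, enqueueing precisely the inconsistent ones. Because the complement of an attractor induces a game graph, $\Gamma_{i+1}$ is well defined and every such $v$ keeps an outgoing edge, so the $\min$ is taken over a nonempty set.

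Combining the cases shows that $L_i$ captures all vertices of $V_{i+1}$ inconsistent with $f_i \restr V_{i+1}$, which is the invariant at entry to the loop; preservation across lift-steps then comes from~\cite{brim2011}. I expect the main obstacle to be the player-2 case: one must invoke the attractor structure, rather than mere edge-deletion monotonicity, to argue cleanly that the surviving player-2 vertices are untouched, and one must simultaneously confirm that $\Gamma_{i+1}$ remains a legitimate subgame so that the notion of consistency still applies.
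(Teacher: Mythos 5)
Your proposal is correct and follows essentially the same route as the paper's proof: induction on $i$ with the base case that $f_1$ has all vertices consistent, the inductive step relying on the static algorithm of~\cite{brim2011} terminating with an empty list (hence all vertices of $V_i$ consistent), and the key observation that deleting a player-2 attractor can only remove edges of player-1 vertices, so the only candidates for inconsistency are exactly the vertices with an edge into $A_i$ that step~1 inspects. Your write-up is somewhat more explicit than the paper's (the case split by vertex owner and the remark that $\Gamma_{i+1}$ remains a legitimate subgame), but the underlying argument is the same.
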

\begin{proof}
The fact that the static algorithm correctly returns a progress measure with only 
consistent vertices when the invariant holds was shown in~\cite{brim2011}. It
was also shown in~\cite{brim2011} that the invariant is maintained in the loop.
It remains to show that the condition holds when we call the static algorithm at
step~3.
For the base case, let $i = 1$. In the initial progress measure $f_1$ and the initial game graph $\Gamma_1$, 
every vertex is consistent. By the definition of a player-2 attractor, deleting the set $A_1$ potentially
removes edges $(v,v')$ where $v$ is a player-1 vertex in $V\setminus A_1$
and $v'$ is in $A_1$. (Note that $v$ cannot be a player-2 vertex.) 
All of the vertices not consistent anymore are added to $L_i$ in step~1 of the decremental algorithm. For the inductive step let $i = j$. By induction hypothesis,
all vertices which were not consistent with the progress measures $f_{h-1}$ restricted to $V_{h}$ for 
$2\leq h \leq j$ were added to the corresponding lists. Thus by the correctness of the static algorithm, 
it correctly computes 
the new progress measure $f_{h}$ for the game graph $\Gamma_{h}$ where every vertex is consistent. 
Thus also every vertex in the progress measure $f_j$ restricted to $V_{j}$ is consistent. 
Again the player-2 attractor is removed and vertices which are not consistent with progress measure $f_j$ 
restricted to $V_{j+1}$ are put into $L_j$ by step~1 of the algorithm. 
\end{proof}
Thus we proved that the static algorithm always correctly updates to the new
progress measure in each iteration. 
The winning region of player-1 is obtained by the returned progress measure (step~4).
The decremental algorithm thus correctly computes the sequence 
$Z_1,Z_2, \dots Z_k$, where $Z_i = W_1(\phi)$ in $\Gamma_i$.

\smallskip\noindent\emph{Running Time.}
The calculation of the initial progress measure for the mean-payoff game $\Gamma$ with
threshold $\V$ is in time $\O(n \cdot m \cdot W)$.
The vertices which are not consistent anymore after the deletion of $A_i$ 
can be found in time $\O(m)$ (step~1). As at most $n$ such sets $A_i$ exist, the running time is
$\O(mn)$. In step~3 the static algorithm is executed with
our current progress measure $f_i$:
Every time a vertex $v$ is picked from 
the list $L_i$ it costs $\O(|\Out(v) + \In(v)|)$ time to
use $\lift$ on it and to look for vertices in $\In(v)$ which are not consistent anymore
(steps~1-3 in the static algorithm).
This cost is charged to its incident edges. Note that deleting a set of vertices and 
their corresponding player-2 attractor 
will only potentially \emph{increase} the progress measure of some player-1 vertices.
As we can increase the progress measure of every vertex only $nW$ times 
before it is set to $\top$ where it is always consistent, 
we get the desired bound of $\O(m\cdot n \cdot W)$.

\noindent Thus our decremental algorithm for threshold mean-payoff games works as desired and we obtain 
the following result:

\begin{theorem}~\label{thm:decr}
Given a game graph $\Gamma$, a threshold mean-payoff objective $\phi$ and a
sequence of sets $A_1, A_2, \dots, A_k$ such that each $A_i$ is a player-2 
attractor of a set $X_i$ in the game $\Gamma_i=\Gamma \restr (V \setminus \bigcup_{j<i} A_j)$, 
the sequence $Z_1,Z_2,\dots, Z_k$, where $Z_i = W_1(\phi)$ in $\Gamma_i$ can be computed in
$\O(n \cdot m \cdot W)$ time.
\end{theorem}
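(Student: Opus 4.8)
The plan is to derive the theorem by combining the correctness claim with a global running-time analysis, since the individual ingredients have been assembled above. For correctness I would invoke the preceding Lemma: it guarantees that at each entry to step~3 of iteration~$i$ the list $L_i$ holds exactly the vertices that are inconsistent with $f_i$ restricted to $V_{i+1}$ in $\Gamma_{i+1}$, which is precisely the precondition under which the static algorithm of~\cite{brim2011} is known to terminate with an all-consistent progress measure $f_{i+1}$. Reading off $\{v \in V_{i+1} \mid f_{i+1}(v) \neq \top\}$ then yields $Z_i = W_1(\phi)$ in $\Gamma_i$ for every $i$, which is the required output sequence.

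For the running time I would split the total work into three parts. The initial call to the static algorithm on $\Gamma$ costs $\O(n \cdot m \cdot W)$. The per-iteration scan in step~1 that detects the newly inconsistent vertices costs $\O(m)$, and since each deleted set is nonempty there are at most $n$ iterations, so this contributes $\O(n \cdot m)$ overall. The dominant term is the aggregate cost of all $\lift$-operations across the whole sequence of step~3 calls; here I would charge each $\lift$ of a vertex $v$, which costs $\O(|\Out(v)| + |\In(v)|)$ to recompute $f(v)$ and re-scan $\In(v)$, to the edges incident to $v$, and then bound how many times each fixed vertex can be lifted over the entire run.

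The hard part, and the single claim on which the whole bound rests, is the monotonicity flagged in the Key Idea: throughout the entire deletion sequence no vertex's progress measure ever decreases. I would argue this from the structure of player-2 attractors. Deleting $A_i$ removes only edges whose source lies in $V_{i+1}$ and whose target lies in $A_i$, and such a source must be a player-1 vertex, because any player-2 vertex with an edge into the player-2 attractor $A_i$ would itself have been absorbed into $A_i$. Since the $\lift$ of a player-1 vertex takes the minimum of $f(v') \ominus w(v,v')$ over its remaining successors, deleting successors can only raise this minimum, so every vertex value is nondecreasing both within a call and across calls. Granting this, each vertex can be lifted at most $nW$ times over the entire run, because every effective $\lift$ strictly increases its value along the finite chain $C_\Gamma$ of length $nW$ before reaching $\top$, where it stays consistent and is never lifted again. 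Summing the charged cost then gives $\sum_v nW \cdot \O(|\Out(v)| + |\In(v)|) = \O(nW) \cdot \O(m) = \O(n \cdot m \cdot W)$ for all lifts, and adding the three contributions yields the claimed $\O(n \cdot m \cdot W)$ total, completing the proof.
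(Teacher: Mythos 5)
Your proposal is correct and follows essentially the same route as the paper: correctness via the preceding invariant lemma for the list $L_i$, and an amortized charging argument for the running time whose crux is that deleting a player-2 attractor only removes edges out of player-1 vertices (any player-2 vertex with an edge into the attractor would itself lie in it), so player-1 lifts, being minima over successors, can only increase and each vertex is lifted at most $\O(n\cdot W)$ times before reaching $\top$. You in fact spell out the monotonicity argument in slightly more detail than the paper, which merely asserts it; otherwise the decomposition of the cost (initial static run, per-deletion scans, charged lifts) is identical.
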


\begin{remark}\label{rem:nodeladd}
Note that the running time analysis of our decremental algorithm crucially
depends on the monotonicity property of the progress measure.
If edges are both added and deleted, then the monotonicity property does not hold.
Hence obtaining a fully dynamic algorithm that supports both addition/deletion of 
vertices/edges with running time $\O(n \cdot m \cdot W)$ is an interesting open problem.
However, we will show that for solving mean-payoff parity games, the decremental 
algorithm plays a crucial part.
\end{remark}

\section{Threshold Mean-Payoff Parity Games}
In this section we present algorithms for threshold mean-payoff parity games.
Our most interesting contributions are for the base case of 
mean-payoff Büchi- and mean-payoff coBüchi objectives, and the general case follows
a standard recursive argument. 

\subsection{Threshold Mean-Payoff Büchi Games}
In this section we consider threshold mean-payoff B\"uchi games.

\smallskip\noindent{\em Algorithm for threshold mean-payoff B\"uchi games.}
The basic algorithm is an iterative algorithm that deletes player-2 attractors.
The algorithm proceeds in iterations. In iteration $i$, let $D_i$ be the set of 
vertices already deleted. Consider the subgame $\Gamma_i=\Gamma \restr (V\setminus D_i)$.
Then the following steps are executed:
\begin{enumerate}
\item Let $V^i= V \setminus D_i$ and $B_i$ denote the set of B\"uchi vertices (or vertices with 
priority~0) in $\Gamma_i$.
Compute $Y_i=\attr_1(B_i)$ the player-1 attractor to $B_i$ in $\Gamma_i$.

\item Let $X_i=V^{i} \setminus Y_i$. If $X_i$ is non-empty, remove $A_i=\attr_2(X_i)$  from the game graph,
and proceed to the next iteration.

\item Else $V^{i}=Y_i$. Let $U_i=W_1(\phi)$ in $\Gamma_i$, where $\phi=\MeanPayoff(\V)$, be the winning 
region for the threshold mean-payoff objective in $\Gamma_i$. 
Let $X_i=V^{i} \setminus U_i$. 
If $X_i$ is non-empty, remove $A_i=\attr_2(X_i)$  from the game graph,
and proceed to the next iteration.
If $X_i$ is empty, then the algorithm stops and all the remaining vertices are winning for player~1
for the threshold mean-payoff B\"uchi objective. 
\end{enumerate}

\smallskip\noindent\emph{Correctness.}
Since the correctness argument has been used before~\cite{CJH05}, we only present a brief sketch:
The basic correctness argument is to show that all vertices removed over all iterations 
do not belong to the winning set for player~1. 
In the end, for the remaining vertices, player~1 can ensure to reach the B\"uchi vertices,
and ensures the threshold mean-payoff objectives.
A strategy that plays for the threshold mean-payoff objectives longer and longer, and in between
visits the B\"uchi vertices, ensures that the threshold mean-payoff B\"uchi objective is satisfied.

\smallskip\noindent{\em Running time analysis.}
We observe that the total running time to compute all attractors is at most $\O(n \cdot m)$,
since the algorithm runs for $\O(n)$ iterations and each attractor computation is linear time.
In step~3, the algorithm needs to compute the winning region for threshold mean-payoff 
objective.
The algorithm always removes a set $X_i$ and its player-2 attractor $A_i$, and requires
the winning set for player~1. 
Thus we can use the decremental algorithm from Section~\ref{sec:decr}, which precisely supports
these operations.  
Hence using Theorem~\ref{thm:decr} in the algorithm for threshold mean-payoff B\"uchi games, 
we obtain the following result.

\begin{theorem}\label{thm:buchi}
Given a game graph $\Gamma$ and a threshold mean-payoff B\"uchi objective $\phi$,
the winning set $W_1(\phi)$ can be computed in $\O(m\cdot n \cdot W)$ time.
\end{theorem}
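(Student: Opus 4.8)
The plan is to verify that the iterative algorithm of Section~4.1 is correct and then to charge its running time to two disjoint sources: the attractor computations, and the threshold mean-payoff winning-set computations, the latter amortized through a single invocation of the decremental algorithm.

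For correctness I would follow the standard argument of~\cite{CJH05} in two directions. Every vertex removed during the run is losing for player~1: a vertex in $X_i = V^i \setminus Y_i$ (step~2) is one from which player~2 can avoid the B\"uchi set forever and thus violate the parity condition, while a vertex in $X_i = V^i \setminus U_i$ (step~3) is already losing for the threshold mean-payoff objective alone; in both cases the player-2 attractor $A_i$ inherits the loss. Conversely, on the final subgame the player-1 attractor to the B\"uchi set is everything, and player~1 wins the threshold mean-payoff game, so the interleaving strategy sketched in the excerpt satisfies the conjunction. This closes both inclusions and identifies $W_1(\phi)$ with the set of surviving vertices.

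For the running time, the first observation is that the algorithm runs for $\O(n)$ iterations: whenever $X_i$ is non-empty its player-2 attractor $A_i$ is non-empty and deleted, so the vertex set strictly shrinks. Each player-1 attractor to the B\"uchi vertices and each player-2 attractor is computed in $\O(m)$ time, so all attractor work costs $\O(n \cdot m)$ in total. The decisive step is the treatment of the mean-payoff computations in step~3. The key point is that \emph{every} deletion the algorithm ever performs---in both step~2 and step~3---is of the form $A_i = \attr_2(X_i)$, i.e.\ a player-2 attractor in the current subgame. This is precisely the deletion interface of the decremental algorithm of Section~\ref{sec:decr}. I would therefore run the decremental algorithm alongside the B\"uchi algorithm, letting the latter choose each $A_i$ adaptively and reading off the maintained progress measure to obtain $U_i = W_1(\MeanPayoff(\V))$ whenever step~3 requires it. By Theorem~\ref{thm:decr}, the entire sequence of these winning-set computations costs $\O(n \cdot m \cdot W)$ in total, rather than the $\O(n \cdot m \cdot W)$ per iteration that a naive recomputation with the static algorithm would incur. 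Summing, the overall time is $\O(n \cdot m) + \O(n \cdot m \cdot W) = \O(n \cdot m \cdot W)$.

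The main obstacle is confirming that the decremental interface genuinely applies, namely that the removals produced by the B\"uchi algorithm are exactly player-2 attractor deletions with no additions of edges or vertices. This is the hypothesis on which the monotonicity of the progress measure (Remark~\ref{rem:nodeladd}) and hence the amortized bound of Theorem~\ref{thm:decr} rests. Once this correspondence is checked, the theorem follows immediately by combining the correctness sketch with Theorem~\ref{thm:decr} and the linear attractor bound.
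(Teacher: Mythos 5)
Your proposal is correct and follows essentially the same route as the paper: the correctness sketch via \cite{CJH05} (removed vertices are losing, surviving vertices win by interleaving mean-payoff play with visits to the B\"uchi set) and, crucially, the observation that all deletions are player-2 attractor removals so that Theorem~\ref{thm:decr} amortizes the mean-payoff winning-set computations to $\O(n \cdot m \cdot W)$ total, plus $\O(n \cdot m)$ for attractors. The ``main obstacle'' you flag is exactly the point the paper makes, and it is resolved as you suggest, since the decremental algorithm's interface permits an adaptively chosen sequence of attractor deletions.
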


\subsection{Threshold Mean-Payoff coBüchi Games}
In this section we will present an $\O(n \cdot m \cdot W)$-time algorithm for threshold
mean-payoff coB\"uchi games. 
We start with the description of the basic algorithm for threshold mean-payoff 
coB\"uchi games.

\smallskip\noindent{\em Algorithm for threshold mean-payoff coB\"uchi games.}
The basic algorithm is an iterative algorithm that deletes player-1 attractors.
The algorithm proceeds in iteration. In iteration $i$, let $D_i$ be the set of 
vertices already deleted. Consider the subgame $\Gamma_i=\Gamma \restr (V\setminus D_i)$.
Then the following steps are executed:
\begin{enumerate}
\item Let $V^i= V \setminus D_i$ and $C_i$ denote the set of coB\"uchi vertices (or vertices with 
priority~1) in $\Gamma_i$.
Compute $Y_i=\attr_2(C_i)$ the player-2 attractor to $C_i$ in $\Gamma_i$.

\item Let $X_i=V^{i} \setminus Y_i$. Consider the subgame $\widehat{\Gamma}_i=\Gamma_i \restr X_i$.
Compute the winning region $Z_i=W_1(\phi)$ for player~1 in $\widehat{\Gamma}_i$, where $\phi=\MeanPayoff(\V)$
is the threshold mean-payoff objective.

\item If $Z_i$ is non-empty, remove $\attr_1(Z_i)$ from $\Gamma_i$, and proceed to the next
iteration. Else if $Z_i$ is empty, then all remaining vertices are winning for player~2.
\end{enumerate}

\smallskip\noindent{\em Correctness argument.}
Consider the subgame $\Gamma_i$. In each subgame $\widehat{\Gamma}_i$ of $\Gamma_i$ 
all edges of player~2 are intact, since it is obtained after removing a player-2 attractor $Y_i$.
Moreover, there is no priority-1 vertex in $\widehat{\Gamma}_i$. 
Hence ensuring the threshold mean-payoff objective in $\widehat{\Gamma}_i$ for player~1 ensures
satisfying the threshold mean-payoff coB\"uchi objective. Hence the set $Z_i$ and its player-1 
attractor belongs to the winning set of player~1 and can be removed.
Thus all vertices removed are part of the winning region for player~1.
Upon termination, in $\widehat{\Gamma}_i$, player~1 cannot satisfy the threshold mean-payoff 
condition from any vertex.
Consider a player-2 strategy, where in $\widehat{\Gamma}_i$ player~2 falsifies the threshold 
mean-payoff condition, and in $Y_i$ plays an attractor strategy to reach $C_i$ (priority-1 vertices).
Given such a strategy, either 
(a)~$Y_i$ is visited infinitely often, and then the coB\"uchi objective is violated; or
(b)~from some point on the play stays in $\widehat{\Gamma}_i$ forever, and then the threshold
mean-payoff objective is violated.
This shows the correctness of the algorithm. 
However, the running time of this algorithm is not $\O(n \cdot m \cdot W)$. 
We now present the key ideas to obtain an $\O(n \cdot m \cdot W)$-time algorithm.

\smallskip\noindent{\em First intuition.}
Our first intuition is as follows. 
In step~2 of the above algorithm, instead of obtaining the whole winning region $W_1(\phi)$
in $\widehat{\Gamma}_i$ it suffices to identify a subset $X_i$ of the winning region
(if it is non-empty) and remove its player-1 attractor.
We call this the modified algorithm for threshold mean-payoff coB\"uchi games.
We first describe why we cannot use the decremental approach in the following remark.

\begin{remark}
Consider the subgames for which the threshold mean-payoff objective must be solved.
Consider Figure~\ref{fig:nodecr}. 
The first player-2 attractor removal induces subgame $\widehat{\Gamma}_1$.
After identifying a winning region $X_1$ of $\widehat{\Gamma}_1$ we remove its player-1 attractor $A_1$.
After removal of $A_1$, we consider the second player-2 attractor to the
priority-1 vertices.
The removal of this attractor induces $\widehat{\Gamma}_2$.
We observe comparing $\widehat{\Gamma}_1$ and $\widehat{\Gamma}_2$ that certain 
vertices are removed, whereas other vertices are added.
Thus the subgames to be solved for threshold mean-payoff objectives do not satisfy
the condition of decremental or incremental algorithms (see Remark~\ref{rem:nodeladd}).

\begin{figure}[H]
\centering
\begin{tikzpicture}
\node at (1.35,1.09) {$X_1$};
\node at (1.3,3) {$\widehat{\Gamma}_1$};
\node at (4.0,3.1) {$\widehat{\Gamma}_2$};
\node at (9.0,2.5) {1};
\draw [-,decorate,decoration={snake,amplitude=.4mm,segment length=2mm,post
    length=1mm}, line width=0.4mm]
    (3.5,0) -- (3.5,5);
\draw [-,decorate,decoration={snake,amplitude=.4mm,segment length=2mm,post
    length=1mm}, line width=0.4mm]
    (4.5,5) -- (4.5,1.54);
\draw (0,0) rectangle (10,5);
\draw (0,0) rectangle (3,2);
\draw[draw =none, pattern=del lines] (0,0) -- (3.5,0) -- (3.5,1.85) --
    (3,2) -- (0,2) -- (0,0);

\draw (10.5,1.5) rectangle (13,4);
\draw [pattern=added lines] (10.7,1.7) rectangle (11.7,2.6);
\draw [pattern= del lines] (10.7,2.9) rectangle (11.7,3.8);
\node at (12.4,3.3) {deleted};
\node at (12.4,2.15) {added};
\draw[draw=none, pattern=added lines] (3.5,1.86) -- (4.5,1.54) -- (4.5,5) -- (3.5,5)
    -- (3.5,1.86);
\draw (8,0) -- (8,5);
\draw [-,decorate,decoration={snake,amplitude=.4mm,segment length=2mm,post length=1mm}]
    (3,2) -- (10,0);
\node at (6.5,0.5) {$A_1$};
\node at (5.0,4.5) {\textbf{player-2 attr.(1)}};
\node at (6.0,2.5) {\textbf{player-2 attr.(2)}};

\end{tikzpicture}
\caption{Pictorial illustration of threshold mean-payoff coBüchi games. The
    subgames $\widehat{\Gamma}_1$ and $\widehat{\Gamma}_2$ are shown. We observe that $\widehat{\Gamma}_2$ is
        obtained both by addition and deletion of game parts to $\widehat{\Gamma}_1$.} \label{fig:nodecr}
\end{figure}
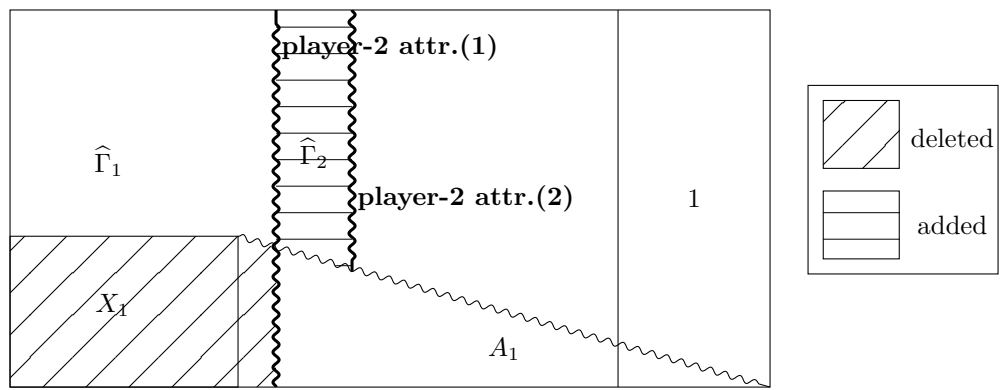
\end{remark}

\smallskip\noindent{\em Second intuition.} 
While we cannot use the decremental algorithm, we can solve the problem in 
$\O(n \cdot m \cdot W)$ time, if we have a modified static algorithm for threshold mean-payoff 
games, with the following property:
(a)~it identifies a subset of the winning region $X$ for player~1, if the winning region
is non-empty, in time $\O(|X| \cdot m \cdot W)$;
(b)~if the winning region is empty, it returns the empty set, and then it takes
time $\O(n \cdot m \cdot W)$.
With such an algorithm we analyze the running time of the above modified algorithm for threshold 
mean-payoff coB\"uchi games.
The total time required for all attractor computations is again $\O(n \cdot m)$. 
Otherwise, we use the modified static algorithm to remove vertices of player-1 and to remove set of size 
$X$ we take $\O(|X| \cdot m \cdot W)$ time, and thus we can charge each vertex
$\O(m \cdot W)$ time. 
Hence the total time required is $\O(n \cdot m \cdot W)$.
In the rest of the section we present this modified static algorithm for threshold mean-payoff games.

\smallskip\noindent\emph{Problem Statement.}
\begin{framed}
\begin{tabular}{ l l}
\textbf{Input:}    & Mean-payoff game $\Gamma$ with threshold $\V$.\\
\textbf{Question:} & If $W_1(\MeanPayoff(\V))$ is non-empty, return a nonempty set\\
&$X \subseteq W_1(\MeanPayoff(\V))$ in time $\O(|X| \cdot m \cdot W)$,\\
&else return $\emptyset$ in time $\O(n \cdot m \cdot W)$.
\end{tabular}
\end{framed}

\smallskip\noindent{\em Modified static algorithm for threshold mean-payoff games.}
The basic algorithm for threshold mean-payoff games computes a progress measure, with a 
defined top element value $\top$. 
If the progress measure has the value $\top$ for a vertex, then the vertex is 
declared as winning for player~2. 
With value $\top=n\cdot W$, the correct winning region for both players can be 
identified. 
Moreover, for a given value $\alpha$ for $\top$, the progress measure algorithm 
requires $\O(\alpha \cdot m)$ time.
Our modified static algorithm is based on the following idea:
\begin{enumerate}
\item Consider a value $\alpha \leq n \cdot W$ for the top element. 
With this reduced value for the top element, if a winning region is identified
for player~1, then it is a subset of the whole winning region for player~1.
\item We will iteratively double the value for the top element. 
\end{enumerate}
Given the above ideas our algorithm is an iterative algorithm defined as follows:
Initialize top value $\top_0=W$. The $i$-th iteration is as follows:
\begin{enumerate}
\item Run the progress measure algorithm with top value $\top_i$.

\item If a winning region $X$ for player is identified, return $X$.

\item Else $\top_{i+1}=2 \cdot \top_i$ (i.e., the top value is doubled).

\item If $\top_{i+1} \geq 2 \cdot n \cdot W$, stop the algorithm and return
$\emptyset$, else proceed to the next iteration.
\end{enumerate}
Details can be found in Appendix~\ref{app:winningset}.

\smallskip\noindent{\em Correctness and running time analysis.}
The key steps of the correctness argument and the running time analysis are as follows:
\begin{enumerate}

\item The above algorithm is correct, since if it returns a set $X$ then it is a subset of 
the winning set for player~1. 

\item If the algorithm returns a winning set with top value $\alpha$, then the total 
running time till this iteration is $m \cdot (\alpha + \alpha/2 + \alpha/4 + \cdots)$, because the 
progress with top value $\alpha$ requires time $\O(\alpha \cdot m)$. 
Hence the total running time if a set $X$ is returned with top value $\alpha$ is
$\O(\alpha \cdot m)$.

\item Let $Z$ be a set of vertices such that no player-2 vertex in $Z$ has an edge out of $Z$,
and the whole subgame $\Gamma \restr Z$ is winning for player~1. Then a winning strategy in 
$Z$ ensures that a progress measure with top value $|Z| \cdot W$ would identify the set 
$Z$ as a winning set.

\item From above it follows that if the winning set $X$ is identified at top value $\alpha$, 
but no winning set was identified with top value $\alpha/2$, then the size of the winning set 
is at least $\alpha/(2W)$.

\item It follows from above that if a set $X$ is identified, then the total running time 
to obtain set $X$ is $\O(|X| \cdot m \cdot W)$.

\item Moreover, the total running time of the algorithm when no set $X$ is identified is in 
$\O(n \cdot m \cdot W)$, and in this case, the winning region is empty.

\end{enumerate}
Thus we solved the modified static algorithm for threshold mean-payoff games as desired
and obtain the following result.

\begin{theorem}\label{thm:winningset}
Given a mean-payoff game $\Gamma$ and a threshold $\V$, let
$Z=W_1(\MeanPayoff(\V))$. 
If $Z\neq \emptyset$, then a non-empty set $X \subseteq Z$ can be computed in
time $\O(|X| \cdot m \cdot W)$, 
else an empty set is returned if $Z=\emptyset$, which takes time $\O(n\cdot m\cdot W)$.
\end{theorem}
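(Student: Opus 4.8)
The plan is to analyze exactly the iterative doubling algorithm described above, organizing the argument around three claims: soundness of each individual progress-measure run, a matching size lower bound on any set that is returned, and the two running-time estimates. Throughout, let $f_\alpha$ denote the least progress measure produced by the static algorithm when the top element $\top$ is capped at the value $\alpha$, and write $X_\alpha = \{v \mid f_\alpha(v) \neq \top\}$ for the region it identifies. For soundness I would argue by monotonicity of the progress-measure fixpoint in the cap $\alpha$: lowering $\alpha$ can only push values to $\top$ sooner, so $f_\alpha(v) = \top$ whenever $f_{nW}(v) = \top$, and the map $\alpha \mapsto X_\alpha$ is nondecreasing. Since the uncapped run with $\top = nW$ identifies exactly $Z = W_1(\MeanPayoff(\V))$, every $X_\alpha$ with $\alpha \le nW$ satisfies $X_\alpha \subseteq Z$. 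Hence any set the algorithm returns is a subset of $Z$, which is the correctness requirement.

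The crux is the matching size bound: \emph{if $X_\alpha \neq \emptyset$ but $X_{\alpha/2} = \emptyset$, then $|X_\alpha| \ge \alpha/(2W)$.} Two observations drive this. First, $X_\alpha$ is closed for player~2: if a player-2 vertex $v \in X_\alpha$ had an edge to a $\top$-vertex $v'$, consistency would force $f_\alpha(v) \succeq f_\alpha(v') \ominus w(v,v') = \top$, a contradiction; so no player-2 vertex of $X_\alpha$ leaves $X_\alpha$, and player~1 wins the threshold mean-payoff objective throughout $\Gamma \restr X_\alpha$. Second, for such a player-2-closed, player-1-winning set of size $s := |X_\alpha|$, the classical energy bound for mean-payoff games caps the relevant progress-measure values by $s\cdot W$; consequently a run with top value $s\cdot W$ already keeps every vertex of $X_\alpha$ below $\top$, so $X_{sW} \neq \emptyset$. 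Combining with the monotonicity above, if we had $\alpha/2 \ge s\cdot W$ then $X_{\alpha/2} \supseteq X_{sW} \neq \emptyset$, contradicting $X_{\alpha/2} = \emptyset$. Therefore $\alpha/2 < |X_\alpha|\cdot W$, i.e.\ $\alpha < 2W\,|X_\alpha|$.

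For the running times, a run with cap $\alpha$ costs $\O(\alpha\cdot m)$ and the caps form the geometric sequence $W, 2W, 4W, \dots$, so the cumulative cost through the iteration that uses cap $\alpha$ is $m\,(\alpha + \alpha/2 + \cdots) = \O(\alpha\cdot m)$. If the algorithm returns $X = X_\alpha$ at the first successful cap $\alpha$, then the preceding cap $\alpha/2$ failed, so the size bound gives $\alpha = \O(|X|\cdot W)$ and the total time is $\O(\alpha\cdot m) = \O(|X|\cdot m\cdot W)$. If instead $Z = \emptyset$, soundness guarantees that no cap ever identifies a nonempty set, so the algorithm runs until the cap first enters the range $[nW, 2nW)$ and then stops; the geometric sum bounds the total work by $\O(nW\cdot m) = \O(n\cdot m\cdot W)$, and since this final cap is $\ge nW$ it certifies $Z = \emptyset$ correctly. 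Symmetrically, when $Z \neq \emptyset$ the same reasoning shows a nonempty set is returned no later than the first cap $\ge nW$, so the algorithm never falsely reports $\emptyset$.

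I expect the main obstacle to be the second observation in the size bound, namely that a player-2-closed, player-1-winning region of size $s$ has progress-measure values bounded by $s\cdot W$ rather than the global $nW$. This is precisely where the ``$n$ replaced by $|X|$'' saving originates: within such a region player~1 can confine every play, so the energy she needs is governed by the region's own size, and the localized energy bound must be pinned down carefully. The surrounding monotonicity and player-2-closedness bookkeeping is routine by comparison.
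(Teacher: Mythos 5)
Your proposal is correct and follows essentially the same route as the paper's proof: the same doubling algorithm, the same soundness argument for capped progress measures, the same key lemma that a player-2-closed region $Z$ winning for player~1 is already identified at top value $|Z|\cdot W$ (hence a failure at cap $\alpha/2$ forces $|X|>\alpha/(2W)$), and the same geometric-sum running-time analysis. Your explicit justification that the non-$\top$ region is player-2-closed via consistency, and the monotonicity of the fixpoint in the cap, are details the paper leaves implicit, but they are the same argument.
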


Using the above algorithm to compute the winning set for player~1 in the
subgames, we obtain an algorithm for threshold mean-payoff coBüchi
games in time $\O(n\cdot m\cdot W)$. Details can be found in
Appendix~\ref{app:mpcobuchi}.

\begin{theorem}~\label{thm:cobuchi}
Given a game graph $\Gamma$ and a threshold mean-payoff coB\"uchi objective $\phi$,
the winning set $W_1(\phi)$ can be computed in $\O(n\cdot m \cdot W)$ time.
\end{theorem}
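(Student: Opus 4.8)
The plan is to run the modified algorithm for threshold mean-payoff coB\"uchi games described above, using the algorithm of Theorem~\ref{thm:winningset} as the subroutine in step~2 of each iteration. That is, in the subgame $\widehat{\Gamma}_i = \Gamma_i \restr X_i$ (which contains no priority-1 vertex and retains all player-2 edges, since it is obtained after removing the player-2 attractor $Y_i$), instead of computing the full winning region I would invoke Theorem~\ref{thm:winningset} to either extract a nonempty subset $Z_i \subseteq W_1(\MeanPayoff(\V))$ in time $\O(|Z_i| \cdot m \cdot W)$, or certify that the winning region is empty in time $\O(|X_i| \cdot m \cdot W)$. Whenever $Z_i$ is nonempty I remove $\attr_1(Z_i)$ from $\Gamma_i$ and proceed to the next iteration; the first time $Z_i$ is empty I stop and declare all remaining vertices winning for player~2.

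For correctness I would argue the two inclusions separately, reusing the correctness sketch already given for the basic algorithm. For the forward inclusion, since $Z_i \subseteq W_1(\MeanPayoff(\V))$ in $\widehat{\Gamma}_i$ and $\widehat{\Gamma}_i$ has no priority-1 vertex with all player-2 edges intact, winning the threshold mean-payoff objective in $\widehat{\Gamma}_i$ implies winning the threshold mean-payoff coB\"uchi objective in $\Gamma$; hence $Z_i$ and its player-1 attractor are correctly removed as winning for player~1. For the backward inclusion I rely on the empty-detection guarantee of Theorem~\ref{thm:winningset}, which returns $\emptyset$ only when $W_1(\MeanPayoff(\V))$ is genuinely empty in $\widehat{\Gamma}_i$. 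Thus at termination the player-2 strategy constructed in the correctness sketch (falsify the mean-payoff objective inside $\widehat{\Gamma}_i$, and in $Y_i$ attract to the priority-1 vertices $C_i$) is winning from every remaining vertex, so no undeleted vertex is winning for player~1. Consequently the set of all removed vertices is exactly $W_1(\phi)$.

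The running time is the heart of the argument, and I would establish it by amortization. Each attractor computation (both $\attr_2(C_i)$ and $\attr_1(Z_i)$) is linear, and the algorithm runs for $\O(n)$ iterations, so the total attractor cost is $\O(n \cdot m)$. For the calls to Theorem~\ref{thm:winningset}: every iteration in which a nonempty $Z_i$ is returned costs $\O(|Z_i| \cdot m \cdot W)$, and since the vertices of $\attr_1(Z_i) \supseteq Z_i$ are deleted and never reconsidered, I charge $\O(m \cdot W)$ to each deleted vertex, giving $\O(n \cdot m \cdot W)$ across all successful iterations. The only call that cannot be charged to deleted vertices is the single final call returning $\emptyset$, which by Theorem~\ref{thm:winningset} costs $\O(|X_i| \cdot m \cdot W) = \O(n \cdot m \cdot W)$; crucially this happens exactly once, because the algorithm halts immediately afterward. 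Summing the three contributions yields the claimed $\O(n \cdot m \cdot W)$ bound.

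The main obstacle is precisely isolating this one expensive empty-detection call and confirming that every other invocation of the subroutine is paid for by the vertices it causes to be deleted. This is exactly where the subset-identification property of Theorem~\ref{thm:winningset}---cost proportional to the size of the returned set rather than to the size of the whole subgame---is essential: without it, each of the $\O(n)$ iterations would cost $\O(n \cdot m \cdot W)$ and reintroduce the extra $\O(n)$ factor that the modified algorithm is designed to eliminate.
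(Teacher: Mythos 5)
Your proposal is correct and follows essentially the same route as the paper: it plugs the subset-identification algorithm of Theorem~\ref{thm:winningset} into the iterative coB\"uchi algorithm (the paper's Algorithm~\ref{alg:cobuchi}), reuses the correctness argument of the basic algorithm, and amortizes the running time by charging $\O(m \cdot W)$ to each vertex deleted via $\attr_1(Z_i)$ plus a single final empty-detection call of cost $\O(n \cdot m \cdot W)$. Your explicit isolation of that one expensive final call is exactly the point the paper makes implicitly, so nothing further is needed.
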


\subsection{Threshold Mean-Payoff Parity Games}
The algorithm for threshold mean-payoff parity games is the standard recursive
algorithm~\cite{CJH05}
(classical parity game-style algorithm) that generalizes the B\"uchi and coB\"uchi cases
(which are the base cases).
The running time recurrence is as follows: 
$T(n,d,m,w) = n (T(n,d-1,m) + \O(m)) + \O(nmW)$.
Using our approach we obtain the following result (details in Appendix).

\begin{theorem}~\label{thm:tmpp}
Given a game graph $\Gamma$ and a threshold mean-payoff parity objective $\phi$,
the winning set $W_1(\phi)$ can be computed in $\O(n^{d-1} \cdot m \cdot W)$ time.
\end{theorem}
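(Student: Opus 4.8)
\emph{Proof approach.} The plan is to compute $W_1(\phi)$ via the classical parity-game recursion of~\cite{CJH05}: peel off the lowest priority occurring in the current subgame, recurse on a game carrying one fewer priority, and take the mean-payoff Büchi and coBüchi procedures of Theorems~\ref{thm:buchi} and~\ref{thm:cobuchi} as the base cases. Once the algorithm is set up, the bound follows by solving the stated recurrence $T(n,d,m)=n(T(n,d-1,m)+\O(m))+\O(nmW)$ by induction on the number of priorities $d$.

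Concretely, let $\ell$ be the minimum priority in the current subgame. If $\ell$ is even I would run the Büchi-style fixpoint, but with the direct threshold mean-payoff solve on the attractor-free remainder replaced by a recursive call on the subgame restricted to priorities $>\ell$, which carries at most $d-1$ priorities; if $\ell$ is odd I would run the symmetric coBüchi-style fixpoint, again replacing the inner solve by a $(d-1)$-priority recursion. Each of the $\O(n)$ fixpoint iterations performs exactly one such recursive call, costing $T(n,d-1,m)$, together with one player-1 or player-2 attractor computation, costing $\O(m)$; this accounts for the factor $n\,(T(n,d-1,m)+\O(m))$ in the recurrence.

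For correctness I would invoke the standard parity argument already used in~\cite{CJH05}: every vertex removed during the fixpoint is shown to be losing for player~1, since it lies in an attractor to a region from which player~2 can defeat either $\Parity_\Gamma(p)$ or $\MeanPayoff_\Gamma(\V)$; and the vertices surviving the fixpoint admit the interleaving player-1 strategy described for the Büchi base case, which alternately enforces the threshold mean-payoff objective over longer and longer stretches while revisiting the dominant even priority, so that both the parity and the threshold requirements hold along the resulting play. Because the mean-payoff threshold is ultimately discharged at the two-priority leaves, it suffices that Theorems~\ref{thm:buchi} and~\ref{thm:cobuchi} correctly combine the threshold and two-priority objectives, which they do.

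The delicate point, and the step I expect to be the main obstacle, is justifying the \emph{additive} $\O(nmW)$ term rather than a multiplicative one: a crude analysis would attach a fresh $\O(nmW)$ threshold solve to each of the $\O(n)$ iterations, inflating the per-level cost to $\O(n^2mW)$ and the total to the old $\O(n^dmW)$. To avoid this I would reuse the amortization of the base cases. At a level with even $\ell$ the successive threshold mean-payoff subgames differ only by deletions of vertex sets together with their player-2 attractors, so the whole sequence of solves is handled in total $\O(nmW)$ time by the decremental algorithm of Theorem~\ref{thm:decr}; at a level with odd $\ell$, where the decremental property fails (Remark~\ref{rem:nodeladd}), the modified static algorithm of Theorem~\ref{thm:winningset} charges $\O(m\cdot W)$ per removed vertex, again $\O(nmW)$ in total. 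This yields the recurrence $T(n,d,m)=n(T(n,d-1,m)+\O(m))+\O(nmW)$ with base case $T(n,2,m)=\O(nmW)$. Finally, assuming inductively $T(n,d-1,m)=\O(n^{d-2}mW)$, the recurrence gives $T(n,d,m)=n\cdot\O(n^{d-2}mW)+\O(nm)+\O(nmW)=\O(n^{d-1}mW)$, where the lower-order additive contributions are absorbed because $n^{d-1}\ge n$ for $d\ge 2$. This establishes the claimed $\O(n^{d-1}\cdot m\cdot W)$ running time.
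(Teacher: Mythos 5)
Your proposal follows the paper's proof essentially verbatim: the same recursive algorithm of~\cite{CJH05} with the B\"uchi/coB\"uchi procedures of Theorems~\ref{thm:buchi} and~\ref{thm:cobuchi} as the base cases, the same recurrence $T(n,d,m)=n(T(n,d-1,m)+\O(m))+\O(nmW)$, and the same key amortization step --- the threshold mean-payoff solves arising at even-priority levels are charged to the decremental algorithm of Theorem~\ref{thm:decr}, which is applicable precisely because only player-2 attractors are removed there. The one (harmless) inaccuracy is your claim that Theorem~\ref{thm:winningset} is invoked at odd-priority levels: in the paper's algorithm the odd levels perform no threshold solves at all, only recursive calls and attractor computations, and Theorem~\ref{thm:winningset} enters solely inside the two-priority coB\"uchi base case, so that extra $\O(nmW)$ term is spurious slack and the recurrence (hence the $\O(n^{d-1}\cdot m\cdot W)$ bound) holds regardless.
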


\section{Optimal Values for Mean-payoff Parity Games}
In this section we present an algorithm which computes the value function for 
mean-payoff parity games. For mean-payoff games a dichotomic search approach was 
presented in~\cite{brim2011}.
We show that such an approach can be generalized to mean-payoff parity games.

\smallskip\noindent\emph{Range of Values for the Dichotomic Search.}
To describe the algorithm we recall a lemma about the possible range of optimal 
values of a mean-payoff parity game.
The lemma is an easy consequence of the characterization of~\cite{CJH05} 
that the mean-payoff parity value coincide with the mean-payoff value,
and the possible range of value for mean-payoff games.

\begin{lemma}[\cite{CJH05,Ehrenfeucht1979,Lifshits2007}]\label{lem:optval}
Let $\Gamma$ be a mean-payoff parity game. For each vertex $v \in V$, the optimal value  
$\val_{\Gamma}(\MPP)(v)$ is a rational number $\frac{y}{z}$ such that $1 \leq z \leq n$ 
and $|y| \leq z \cdot W$.
\end{lemma}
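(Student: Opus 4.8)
The plan is to establish the result by reducing the mean-payoff parity value problem to the mean-payoff value problem, and then invoking the known range characterization for plain mean-payoff games. First I would recall the key characterization from~\cite{CJH05}: for every vertex $v \in V$, the optimal mean-payoff parity value $\val_\Gamma(\MPP)(v)$ coincides with the optimal mean-payoff value $\val_{\Gamma'}(\MP)(v)$ on some appropriate subgame $\Gamma'$ where player~1 can guarantee the parity condition. The point of this characterization is that achieving the parity condition does not cost anything in terms of the long-run average, so the attainable value is exactly a mean-payoff value realized on some cycle reachable under an optimal (memoryless for mean-payoff) play.

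Once this reduction is in place, the heart of the argument is the classical range bound for mean-payoff games, due to~\cite{Ehrenfeucht1979,Lifshits2007}. The standard proof observes that both players have optimal memoryless strategies, so fixing optimal memoryless strategies for both players leaves a graph in which the play eventually settles into a simple cycle. The value at $v$ is then the average weight of this cycle, i.e. $\frac{1}{\ell}\sum_{i} w(e_i)$ where $\ell \leq n$ is the length of a simple cycle. Writing this as $\frac{y}{z}$ with $z = \ell$, I would note $1 \leq z \leq n$ since a simple cycle has at least one and at most $n$ edges, and $|y| = |\sum_i w(e_i)| \leq \ell \cdot W = z \cdot W$ since each of the $z$ edge weights is bounded in absolute value by $W$. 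This yields exactly the claimed bound $|y| \leq z \cdot W$ with $1 \leq z \leq n$.

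The main obstacle — or rather the only subtle point — is justifying that the value is realized by a \emph{simple} cycle rather than merely by an infinite play with a lim-inf average. This is where memoryless determinacy of mean-payoff games is essential: one must cite that optimal memoryless strategies exist for both players, so that under the pair of optimal strategies the outcome is an ultimately periodic path whose mean-payoff equals the average weight of the repeated simple cycle. I would therefore structure the proof as: (i)~apply the~\cite{CJH05} characterization to pass from $\MPP$ to $\MP$; (ii)~invoke memoryless determinacy of mean-payoff games to reduce to a single simple cycle; (iii)~perform the elementary counting on cycle length and total weight to extract $z$ and $y$ with the stated bounds. Since all three ingredients are quoted results, the proof is genuinely short and amounts to assembling them, which matches the remark in the excerpt that the lemma is ``an easy consequence'' of the cited characterizations.
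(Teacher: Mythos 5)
Your proposal is correct and follows exactly the route the paper intends: the paper's own justification is the one-sentence remark that the lemma follows from the characterization of~\cite{CJH05} (mean-payoff parity values coincide with mean-payoff values) combined with the classical range bound for mean-payoff games of~\cite{Ehrenfeucht1979,Lifshits2007}, which is precisely your steps (i)--(iii). You merely fill in the standard details (memoryless determinacy, reduction to a simple cycle of length $z \leq n$ with total weight $|y| \leq z \cdot W$) that the paper leaves to the cited references.
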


By Lemma~\ref{lem:optval} the value of each vertex $v \in V$, is contained in the following set of rationals 
\begin{align*}
S^\Gamma = \bigg\{ \frac{y}{z}\  \bigg | \ y,z \in \Z, 1 \leq z \leq n \land -z \cdot W \leq y \leq z \cdot W \bigg\}.
\end{align*}

\begin{definition}
Let $\Gamma$ be a mean-payoff parity game.
We denote the set of vertices $v \in V$ such that 
$\val_{\Gamma}(\MPP)(v) \circ \mu$ where $\circ \in \{<,\leq,=,\geq,>\}$
with $V^{\circ \mu}_\Gamma$.
\end{definition}

\smallskip\noindent\emph{Key Observation.}
Let $\Gamma = (V,E,\langle V_1, V_2 \rangle,w,p)$ be a mean-payoff parity game.
Let $\mu \in [-W,W]$. The sets $V^{>\mu}_\Gamma,V^{=\mu}_\Gamma$ and $V^{<\mu}_\Gamma$ 
can be computed using any algorithm for threshold mean-payoff parity games 
twice (for example using Theorem~\ref{thm:tmpp}). 
To calculate $V^{\geq \mu}_\Gamma$ and $V^{< \mu}_\Gamma$ use the algorithm on $\Gamma$ with the
mean-payoff parity objective $\phi = \Parity_\Gamma(p) \cap \MeanPayoff_\Gamma(\mu)$.
Consider $\Gamma' = (V,E,\langle V_2, V_1 \rangle,w',p)$, where $w'(e) = -w(e)$ for all edges $e \in E$ 
and player-1 and player-2 vertices are swapped.
To calculate $V^{\leq \mu}_\Gamma$ and $V^{> \mu}_\Gamma$ use the algorithm on $\Gamma'$ with 
mean-payoff parity objective $\phi = \Parity_{\Gamma'}(p) \cap \MeanPayoff_{\Gamma'}(-\mu)$. 
Given the sets $V^{\leq \mu}_\Gamma$, $V^{> \mu}_\Gamma, V^{\geq \mu}_\Gamma$ and $V^{< \mu}_\Gamma$
we can extract the sets $V^{>\mu}_\Gamma,V^{=\mu}_\Gamma$ and $V^{<\mu}_\Gamma$.\\
All values $\mu'$ in  $S^\Gamma$ are of the form $\frac{y}{z}$. For those values
we can determine whether $v \in V_{\Gamma}^{\geq \mu'}$ by applying 
the algorithm for threshold mean-payoff parity games on $\Gamma' = (V,E,\langle V_2, V_1 \rangle,w',p)$ where 
$w'(e) = w(e) \cdot z$ for all $e \in E$ with the mean-payoff parity objectives $\phi = \Parity_\Gamma(p) \cap
\MeanPayoff_\Gamma(y)$. Note that in the worst case, the weight function $w'$ of
$\Gamma'$ is in $\O(nW)$.

\smallskip\noindent\emph{Dichotomic Search.}
Let $\Gamma$ be a mean-payoff parity game. The dichotomic search algorithm is recursive algorithm
initialized with $\Gamma_0 = \Gamma$ and $S_0 = S^\Gamma$. In recursive call $i$ the following steps are
executed:
\begin{enumerate}
\item Let $r_i =  \min(S_i)$ and $s_i = max(S_i)$.
\item Determine $a_1$, the largest element in $S_i$ less than or equal to $\frac{r_i + s_i}{2}$ and  
$a_2$, the smallest element in $S_i$ greater than or equal to $\frac{r_i + s_i}{2}$.
\item Determine the partitions $V_{\Gamma_i}^{<a_1}$, $V_{\Gamma_i}^{=a_1}$, 
$V_{\Gamma_i}^{=a_2}$, $V_{\Gamma_i}^{>a_2}$ using the key observation.
\item For all $v \in V_{\Gamma_i}^{=a_1}$ set the value to $a_1$,
for all $v \in V_{\Gamma_i}^{=a_2}$ set the value to $a_2$ and set the value to  $-\infty$
for all vertices $v$ which are not in any set calculated in step 3.
\item Recurse upon $\Gamma_i \restr V_{\Gamma_i}^{<a_1}$ and $\Gamma_i \restr V_{\Gamma_i}^{>a_2}$.
\end{enumerate}

\smallskip\noindent\emph{Correctness.}
Let $\Gamma$ be a mean-payoff parity game. We prove that the dichotomic search algorithm correctly calculates
$\val_\Gamma(\MPP)(v)$ for all $v \in V$.
The algorithm is initialized with $\Gamma$ and $S^\Gamma$. 
By Lemma~\ref{lem:optval} the values of the vertices 
$v \in V$ are in the set $S^\Gamma$. Because we perform a binary search over the set $S^\Gamma$ 
we can guarantee the
termination of the algorithm. Notice that we need to show that the values calculated in the 
subgames constructed in step~4 are identical to the values in the original game. Then correctness follows 
immediately by our key observation and because we perform a binary search over the set $S^\Gamma$.

\begin{lemma}\label{lem:subgameval}
Given a mean-payoff parity game $\Gamma$ and $\mu \in \Q$, 
let $\Gamma' = \Gamma \restr V^{>\mu}_\Gamma$ and 
$\Gamma'' = \Gamma \restr V^{<\mu}_\Gamma$. 
For all $v \in V^{> \mu}_\Gamma$, we have $\val_{\Gamma'}(\MPP)(v) = \val_{\Gamma}(\MPP)(v)$ and 
for all $v \in V^{< \mu}_\Gamma$, we have $\val_{\Gamma''}(\MPP)(v) = \val_{\Gamma}(\MPP)(v)$.
\end{lemma}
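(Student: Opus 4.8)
The plan is to reduce the statement to a single structural fact: the value-threshold set $V^{>\mu}_\Gamma$ is a \emph{trap} for player~2, meaning player~1 can confine every play started inside it to stay inside, while its complement $V^{\leq\mu}_\Gamma$ is a trap for player~1 from which player~2 can guarantee payoff at most $\mu$. The sets $V^{<\mu}_\Gamma$ and its complement play the symmetric roles. Once this is established, value preservation in the induced subgames is a routine strategy-transfer argument, so I would prove only the first equality (for $\Gamma'=\Gamma\restr V^{>\mu}_\Gamma$) in detail, the second one (for $\Gamma''$) being the same argument after exchanging the two players and negating the weights.

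First I would record the local optimality equations for the value: for a player-1 vertex $v$ one has $\val_\Gamma(\MPP)(v)=\max_{v'\in\Out(v)}\val_\Gamma(\MPP)(v')$, and for a player-2 vertex $v$ one has $\val_\Gamma(\MPP)(v)=\min_{v'\in\Out(v)}\val_\Gamma(\MPP)(v')$. Each follows from determinacy together with the observation that prepending a single edge to a play changes neither $\Inf(\cdot)$ (hence parity) nor the $\liminf$-average (hence $\MP$): player~1 can move to a maximizing successor and then play $\varepsilon$-optimally, and cannot do better than the best successor because player~2 can answer any first move $v\to v'$ by an $\varepsilon$-optimal strategy from $v'$. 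From these equations the trap facts are immediate. If $v\in V^{>\mu}_\Gamma$ is a player-2 vertex then all successors satisfy $\val_\Gamma(\MPP)(v')\geq\val_\Gamma(\MPP)(v)>\mu$, so $\Out(v)\subseteq V^{>\mu}_\Gamma$; and if $v\in V^{>\mu}_\Gamma$ is a player-1 vertex then the successor attaining the max lies in $V^{>\mu}_\Gamma$. In particular every vertex of $V^{>\mu}_\Gamma$ has a successor in $V^{>\mu}_\Gamma$, so $\Gamma'$ is a well-defined subgame; symmetrically $V^{\leq\mu}_\Gamma$ is a trap for player~1 on which player~2 confines the play and guarantees value at most $\mu$.

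With the trap in hand I would prove the two inequalities. For $\val_{\Gamma'}(\MPP)(v)\leq\val_\Gamma(\MPP)(v)$, note that any player-1 strategy in $\Gamma'$ extends to one in $\Gamma$, and that against it player~2 can never leave $V^{>\mu}_\Gamma$ (player-2 vertices have all successors inside), so the resulting play is identical in $\Gamma$ and in $\Gamma'$; taking the supremum over the smaller strategy space $\Sigma_1(\Gamma')\subseteq\Sigma_1(\Gamma)$ gives the bound. For $\val_{\Gamma'}(\MPP)(v)\geq\val_\Gamma(\MPP)(v)$, fix $\varepsilon>0$ small enough that $\val_\Gamma(\MPP)(v)-\varepsilon>\mu$ and take an $\varepsilon$-optimal player-1 strategy $\sigma_1$ in $\Gamma$ guaranteeing at least $\val_\Gamma(\MPP)(v)-\varepsilon$. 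I claim its plays never leave $V^{>\mu}_\Gamma$: a play can only exit by a player-1 move to some $v''$ with $\val_\Gamma(\MPP)(v'')\leq\mu$, after which player~2 could switch to a strategy guaranteeing payoff at most $\mu+\varepsilon'$ from $v''$ (the finite prefix being transient), which for small $\varepsilon,\varepsilon'$ contradicts the guarantee of $\sigma_1$. Hence $\sigma_1$ induces a $\Gamma'$-strategy achieving at least $\val_\Gamma(\MPP)(v)-\varepsilon$, and letting $\varepsilon\to 0$ yields the reverse inequality. The statement for $\Gamma''$ and $v\in V^{<\mu}_\Gamma$ follows by the player-swapped, weight-negated version of the same argument.

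I expect the main obstacle to be the escape-punishment step in the lower-bound inequality: one must argue that a near-optimal strategy cannot profitably route a play through the low-value region $V^{\leq\mu}_\Gamma$. The care needed is twofold. First, only player~1 can cause an exit from $V^{>\mu}_\Gamma$, which is exactly the trap property and therefore must be in place \emph{before} this step. Second, the single-edge prefix accumulated before the exit must be genuinely transient for both the $\liminf$-average and the parity condition, so that player~2's punishing continuation alone determines the payoff of the whole play. This is also the only place where the argument uses that the value is strictly above $\mu$ rather than merely nonnegative, since that is what lets $\varepsilon$ and $\varepsilon'$ be chosen to separate the two guarantees.
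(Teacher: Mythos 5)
Your proof is correct and follows essentially the same route as the paper's: establish that $V^{>\mu}_\Gamma$ is a trap for player~2, obtain $\val_{\Gamma'}(\MPP)(v) \leq \val_{\Gamma}(\MPP)(v)$ because only player-1 edges are cut, and obtain the reverse inequality by showing a (near-)optimal player-1 strategy never enters $V^{\leq\mu}_\Gamma$, since by prefix-independence of $\MPP$ player~2 could punish any such exit and force payoff at most $\mu$, contradicting the guarantee strictly above $\mu$. The only notable difference is technical: the paper invokes the existence of exact optimal strategies for both players (citing~\cite{CJH05}) and derives the contradiction from the optimal-vs-optimal play, whereas you use only $\varepsilon$-optimal strategies plus determinacy, which is slightly more self-contained and additionally makes explicit two points the paper glosses over, namely the local optimality equations behind the trap property and the well-definedness of the subgame $\Gamma'$.
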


\begin{proof}
Let  $v \in V^{> \mu}_\Gamma$ be arbitrary.
We will prove $\val_{\Gamma'}(\MPP)(v) = \val_{\Gamma}(\MPP)(v)$ by showing the following two cases: 
\begin{itemize}
	\item $\val_{\Gamma'}(\MPP)(v) \leq \val_{\Gamma}(\MPP)(v)$: 
	Note that there can be no player-2 vertex in $V^{>\mu}_\Gamma$ with an edge to $V^{\leq \mu}_\Gamma$.
	Thus we cut away only edges of player-1 vertices in $\Gamma'$. 
	Consequently player-1 has less choices in $\Gamma'$ than in $\Gamma$ at each of her vertices.
	Thus $\val_{\Gamma'}(\MPP)(v) \leq \val_{\Gamma}(\MPP)(v)$ holds.

	\item $\val_{\Gamma'}(\MPP)(v) \geq \val_{\Gamma}(\MPP)(v)$:
	Let $\sigma_1$ be an optimal strategy for player~1 and let
	$\sigma_2$ be an optimal strategy for player~2 which both exist
	by~\cite{CJH05}.
	We will show that $\sigma_1$ produces plays with vertices in $V^{> \mu}_\Gamma$ only, 
	if it starts from $v$.
	For the sake of contradiction assume that a play $\rho = \outc(v,\sigma_1,\sigma_2)$ 
	contains a vertex $v^* \in  V^{\leq \mu}_\Gamma$.
	Notice that there are no player-2 vertices in $V^{>\mu}_\Gamma$ with edges to $V^{\leq \mu}_\Gamma$. 
	Thus $\sigma_1$ chose a successor vertex in $V^{\leq \mu}_\Gamma$. 
	But when $\rho$ ends up in $V^{\leq \mu}_\Gamma$ 
	the optimal player-2 strategy $\sigma_2$ can guarantee that $\MPP_\Gamma(w,p,\rho) \leq \mu$
	by the definition of $V^{\leq \mu}_\Gamma$. 
	There is a strategy to keep the value of the play starting at
	$v$ greater than $\mu$ by the definition of $V^{>\mu}_\Gamma$. Thus any play $\rho$ leading 
	to $V^{\leq \mu}_\Gamma$ by $\sigma_1$ is not optimal which is a contradiction to our assumption. 
	Consequently $\val_{\Gamma'}(\MPP)(v) \geq \val_{\Gamma}(\MPP)(v)$ follows.
\end{itemize}
The fact that for all $v \in V^{< \mu}_\Gamma$, we have
$\val_{\Gamma''}(\MPP)(v) = \val_{\Gamma}(\MPP)(v)$ follows by a symmetric argument.

%
%
%

\end{proof}

\smallskip\noindent\emph{Running Time.}
The running time of the dichotomic search is $\O(n \cdot \log(nW) \cdot \mathsf{TH})$
where $\mathsf{TH}$ is the running time of an algorithm for the threshold mean-payoff parity problem.
The additional factor $n$ comes from rescaling the weights of the mean-payoff parity game $\Gamma$ which is
described in the key observation.
The factor $\O(\log(nW))$ is from using binary search on $S$ as $|S| = \O(n^2 \cdot W)$.

\begin{theorem}
Given a game graph $\Gamma$ and an algorithm
that solves the threshold mean-payoff parity problem in $\O(\mathsf{TH})$, the
value function of $\Gamma$ can be computed in time
$\O(n \cdot \log(nW) \cdot \mathsf{TH} )$.
\end{theorem}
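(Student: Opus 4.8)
The plan is to split the statement into correctness and running time, dispatch correctness by citing the material already developed, and then prove the time bound by a level-by-level analysis of the recursion tree of the dichotomic search. For correctness I would note that Lemma~\ref{lem:subgameval} guarantees that the optimal values are preserved when the algorithm passes to the subgames $\Gamma_i \restr V^{<a_1}_{\Gamma_i}$ and $\Gamma_i \restr V^{>a_2}_{\Gamma_i}$, that the key observation computes each of $V^{<a_1}_{\Gamma_i}, V^{=a_1}_{\Gamma_i}, V^{=a_2}_{\Gamma_i}, V^{>a_2}_{\Gamma_i}$ with $\O(1)$ threshold calls, and that by Lemma~\ref{lem:optval} every optimal value lies in $S^\Gamma$; since each recursive call discards roughly half of the current value set, every vertex is eventually assigned its correct value. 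The remaining and substantive task is the running-time estimate, which I would organize around two quantities: the depth of the recursion and the cost incurred at each recursion level.

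First I would bound the recursion depth. The initial value set is $S^\Gamma$, and $|S^\Gamma| = \O(n^2 W)$, since the denominator $z$ ranges over $\{1,\dots,n\}$ and for each $z$ there are $\O(nW)$ admissible numerators. In recursive call $i$ the pivots $a_1$ and $a_2$ are taken at the median of $S_i$, so the value sets passed to the two recursive branches — the elements of $S_i$ strictly below $a_1$ and strictly above $a_2$ — each have size at most $|S_i|/2$. Hence the size of the value set halves along every root-to-leaf path, and the depth is $\O(\log|S^\Gamma|) = \O(\log(n^2 W)) = \O(\log(nW))$.

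Next I would bound the work done at one recursion level. At a node operating on a subgame $\Gamma_i$ with $n_i$ vertices and $m_i$ edges, step~3 issues $\O(1)$ threshold calls through the key observation; because the pivots $a_1, a_2 = y/z \in S^\Gamma$ are rationals, the weights are rescaled by the denominator $z \le n$, so each call runs at maximal weight $\O(nW)$ and therefore costs $\O(n)$ times the threshold time $\mathsf{TH}(n_i,m_i)$ for a size-$(n_i,m_i)$ game at weight scale $W$ (using that $\mathsf{TH}$ is linear in the weight bound). The structural fact I would exploit is that step~5 recurses on the vertex-disjoint sets $V^{<a_1}_{\Gamma_i}$ and $V^{>a_2}_{\Gamma_i}$, which are also disjoint from the resolved sets; thus at any fixed depth the active subgames partition a subset of $V$, giving $\sum_i n_i \le n$ and $\sum_i m_i \le m$. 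Invoking a mild superadditivity of the threshold bound in the vertex count — valid for the bounds of interest, e.g. $\mathsf{TH} = \O(n^{d-1} m W)$, since $\sum_i n_i^{d-1} \le (\sum_i n_i)^{d-1} \le n^{d-1}$ — the summed threshold work over a level collapses to $\sum_i \mathsf{TH}(n_i,m_i) \le \mathsf{TH}$, so the cost of one level is $\O(n) \cdot \sum_i \mathsf{TH}(n_i,m_i) = \O(n \cdot \mathsf{TH})$. Multiplying the per-level cost $\O(n \cdot \mathsf{TH})$ by the depth $\O(\log(nW))$ yields the claimed $\O(n \cdot \log(nW) \cdot \mathsf{TH})$.

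The hard part will be keeping the two appearances of the factor $n$ cleanly separated so that it is counted exactly once. The single $n$ in the final bound has to originate entirely from the weight rescaling in the key observation, while the summation of threshold times across the disjoint subgames of a level must collapse back to one $\mathsf{TH}$ rather than producing a second factor of $n$. This is precisely where the vertex-disjointness of the recursive subproblems — itself a consequence of the partition of $V$ produced in step~3 — together with the superadditivity of the threshold running time is indispensable; for a hypothetical threshold bound that failed to be superadditive in $n$, the per-level collapse, and hence the stated bound, would not go through.
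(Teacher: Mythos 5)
Your proposal is correct and follows essentially the same route as the paper: dichotomic search over $S^\Gamma$ with depth $\O(\log(nW))$, correctness via Lemma~\ref{lem:optval}, Lemma~\ref{lem:subgameval} and the key observation, and the extra factor $n$ coming from the weight rescaling by the denominator $z \le n$. The only addition is that you make explicit the level-by-level accounting (vertex-disjointness of the recursive subgames plus superadditivity of $\mathsf{TH}$ in $n$), which the paper leaves implicit in its one-paragraph running-time argument; this is a faithful filling-in of detail rather than a different proof.
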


\noindent As a corollary of the above theorem and Theorem~\ref{thm:tmpp}, the value function 
for mean-payoff parity games can be computed in $\O(n^d \cdot m \cdot W \cdot\log(nW))$ 
time. 

\section{Conclusion}
In this paper we present faster algorithms for mean-payoff parity games. 
Our most interesting results are for mean-payoff Büchi and mean-payoff coBüchi games, 
which are the base cases. For threshold mean-payoff Büchi and mean-payoff coBüchi games, 
our bound $\O(n\cdot m \cdot W)$ matches the current best-known bound for mean-payoff games. 
For the value problem, we show the dichotomic search approach of~\cite{brim2011} 
for mean-payoff games can be generalized to mean-payoff parity games. 
This gives an additional multiplicative factor of $n \cdot \log(nW)$ as compared to the threshold problem. 
A recent work~\cite{comin2017value} shows that the value problem for mean-payoff objective can be 
solved with a multiplicative factor $n$ compared to the threshold objective 
(i.e., it shaves of the $\log$ factor). 
An interesting question is whether the approach of~\cite{comin2017value} can be generalized to mean-payoff parity games.

\subparagraph*{Acknowledgements}
The authors are partially supported by the Vienna Science and Technology Fund
(WWTF) grant ICT15-003.
For M. H. and A. S. the research leading to these results has received funding
from the European Research Council under the European Union’s Seventh Framework
Programme (FP/2007-2013) / ERC Grant Agreement no. 340506.
K.C. is supported by the Austrian Science Fund (FWF) NFN Grant No S11407-N23 (RiSE/SHiNE) and an ERC Start grant (279307: Graph Games).
\bibliography{mpg}

\begin{thebibliography}{10}

\bibitem{BCGHHJKK14}
R.~Bloem, K.~Chatterjee, K.~Greimel, T.~A. Henzinger, G.~Hofferek,
  B.~Jobstmann, B.~K{\"{o}}nighofer, and R.~K{\"{o}}nighofer.
\newblock Synthesizing robust systems.
\newblock {\em Acta Inf.}, 51(3-4):193--220, 2014.

\bibitem{BCHJ09}
R.~Bloem, K.~Chatterjee, T.~A. Henzinger, and B.~Jobstmann.
\newblock Better quality in synthesis through quantitative objectives.
\newblock In {\em Proc. of CAV}, LNCS 5643, pages 140--156. Springer, 2009.

\bibitem{BFLMS08}
P.~Bouyer, U.~Fahrenberg, K.~G. Larsen, N.~Markey, and J.~Srba.
\newblock Infinite runs in weighted timed automata with energy constraints.
\newblock In {\em Proc. of FORMATS}, LNCS 5215, pages 33--47. Springer, 2008.

\bibitem{BMOU11}
P.~Bouyer, N.~Markey, J.~Olschewski, and M.~Ummels.
\newblock Measuring permissiveness in parity games: Mean-payoff parity games
  revisited.
\newblock In {\em Proc. of ATVA}, LNCS 6996, pages 135--149. Springer, 2011.

\bibitem{brim2011}
L.~Brim, J.~Chaloupka, L.~Doyen, R.~Gentilini, and J.~F. Raskin.
\newblock Faster algorithms for mean-payoff games.
\newblock {\em Form. Methods Syst. Des.}, 38(2):97--118, Apr. 2011.

\bibitem{BuchiLandweber}
J.~R. B{\"u}chi and L.~H. Landweber.
\newblock Solving sequential conditions by finite-state strategies.
\newblock {\em Transactions of the AMS}, 138:295--311, 1969.

\bibitem{CCHRS11}
P.~Cern{\'y}, K.~Chatterjee, T.~A. Henzinger, A.~Radhakrishna, and R.~Singh.
\newblock Quantitative synthesis for concurrent programs.
\newblock In {\em Proc. of CAV}, LNCS 6806, pages 243--259. Springer, 2011.

\bibitem{CAHS03}
A.~Chakrabarti, L.~de~Alfaro, T.~A. Henzinger, and M.~Stoelinga.
\newblock Resource interfaces.
\newblock In {\em Proc. of EMSOFT}, LNCS 2855, pages 117--133. Springer, 2003.

\bibitem{CD10a}
K.~Chatterjee and L.~Doyen.
\newblock Energy parity games.
\newblock In {\em Proc. of ICALP: Automata, Languages and Programming (B)},
  LNCS 6199, pages 599--610. Springer, 2010.

\bibitem{CD11}
K.~Chatterjee and L.~Doyen.
\newblock Energy and mean-payoff parity {M}arkov decision processes.
\newblock In {\em Proc. of MFCS}, LNCS 6907, pages 206--218. Springer, 2011.

\bibitem{CD11b}
K.~Chatterjee and L.~Doyen.
\newblock Games and markov decision processes with mean-payoff parity and
  energy parity objectives.
\newblock In {\em {MEMICS}}, pages 37--46, 2011.

\bibitem{CDGO14}
K.~Chatterjee, L.~Doyen, H.~Gimbert, and Y.~Oualhadj.
\newblock Perfect-information stochastic mean-payoff parity games.
\newblock In {\em {FOSSACS}}, pages 210--225, 2014.

\bibitem{CJH05}
K.~Chatterjee, T.~A. Henzinger, and M.~Jurdzi{\'n}ski.
\newblock Mean-payoff parity games.
\newblock In {\em Proc. of LICS}, pages 178--187. IEEE Computer Society, 2005.

\bibitem{comin2017value}
C.~Comin and R.~Rizzi.
\newblock Improved pseudo-polynomial bound for the value problem and optimal
  strategy synthesis in mean payoff games.
\newblock {\em Algorithmica}, 77(4):995--1021, 2017.

\bibitem{Ehrenfeucht1979}
A.~Ehrenfeucht and J.~Mycielski.
\newblock Positional strategies for mean payoff games.
\newblock {\em International Journal of Game Theory}, 8(2):109--113, 1979.

\bibitem{Lifshits2007}
Y.~M. Lifshits and D.~S. Pavlov.
\newblock Potential theory for mean payoff games.
\newblock {\em Journal of Mathematical Sciences}, 145(3):4967--4974, 2007.

\bibitem{PnueliRosner}
A.~Pnueli and R.~Rosner.
\newblock On the synthesis of a reactive module.
\newblock In {\em Proc. of POPL}, pages 179--190. ACM Press, 1989.

\bibitem{RamadgeWonham}
P.~J. Ramadge and W.~M. Wonham.
\newblock Supervisory control of a class of discrete-event processes.
\newblock {\em SIAM Journal of Control and Optimization}, 25(1):206--230, 1987.

\bibitem{Thomas97}
W.~Thomas.
\newblock Languages, automata, and logic.
\newblock In {\em Handbook of Formal Languages}, volume 3, Beyond Words,
  chapter~7, pages 389--455. Springer, 1997.

\end{thebibliography}
\appendix

\section{Details of the modified static algorithm for threshold mean-payoff games}\label{app:winningset}
The formal description of the winning set algorithm is as follows:\\
\begin{algorithm}[H]\label{alg:winfast}
\SetAlgoLined

\DontPrintSemicolon
\KwIn {A mean-payoff game $\Gamma = ((V, E, \langle V_1, V_2 \rangle), w)$ with mean-payoff objective $\phi$.}
\KwOut {A set of winning vertices $X \subseteq W_1(\phi)$ or $\emptyset$}
\Begin{
        i=1;\\
	\While{$i \leq n$}{
        Define $\ominus_i: C_\Gamma \times \Z \mapsto C_\Gamma$, where
            $C_\Gamma = \{ j \in \N \mid j \leq iW\} \cup \{\top\}$
            $$a \ominus_i b = \begin{cases}
        \max(0,a-b) & \text{if $a \neq \top$ and $a-b \leq iW$} \\
            \top & \text{otherwise}
        \end{cases}
        $$\\

		Use the static algorithm of~\cite{brim2011} 
		described in Section~\ref{sec:decr} 
		(replacing every occurrence of the original $\ominus$ with $\ominus_i$) on $\Gamma$ and $\phi$. \\
		The algorithm will return a progress measure $f$ where every vertex is consistent and we 
		can obtain the winning set: $X = \{ v \mid f(v) \neq \top \}$.\\
        \If{$X \neq \emptyset$}{
			\Return $X$;
		}
		$i = \min (i\cdot2,n)$;\\
	}
	\Return {$\emptyset;$}
}
\caption{Calculating a winning set of a game mean-payoff game $\Gamma$}
\end{algorithm}

\begin{lemma}[Correctness]\label{lem:corrfast}
Given a game graph $\Gamma$ and mean-payoff objectives\\ $\phi = \MeanPayoff(\V)$,
Algorithm~\ref{alg:winfast} returns a winning set $X \subseteq W_1(\phi)$ or
$\emptyset$ if no such set exists.
\end{lemma}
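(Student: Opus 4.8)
The plan is to prove the two directions separately: \emph{soundness} (whenever the algorithm returns a non-empty set $X$, then $X \subseteq W_1(\phi)$) and \emph{completeness} (the algorithm returns $\emptyset$ only when $W_1(\phi) = \emptyset$). The central idea is that a finite, consistent progress measure certifies a winning strategy for player~1 regardless of the value chosen for the top element $\top$: lowering the top value from $nW$ to $iW$ only makes the $\ominus_i$ operation saturate to $\top$ earlier, so it can never assign a finite value to a genuinely losing vertex.

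For soundness, I would first recall that the static algorithm of~\cite{brim2011} invoked at iteration $i$ terminates with a progress measure $f$ in which every vertex is consistent with respect to $\ominus_i$. Let $X = \{v \mid f(v) \neq \top\}$ be the returned set. The first key step is to show that $X$ induces a subgame, i.e.\ that $X$ is closed under the moves witnessed by $f$. Concretely: for a player-1 vertex $v \in X$, consistency provides a successor $v' \in \Out(v)$ with $f(v) \ges f(v') \ominus_i w(v,v')$, and since $f(v) \neq \top$ this forces $f(v') \neq \top$ (so $v' \in X$) and $f(v') - w(v,v') \leq iW$; for a player-2 vertex $v \in X$, the condition holds for all successors, so all of $\Out(v)$ lies in $X$. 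Hence every vertex of $X$ has an outgoing edge inside $X$ and player~2 cannot leave $X$, so $\Gamma \restr X$ is a well-defined game.

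The second step is to observe that, on $X$, every arithmetic operation performed by $\ominus_i$ produces a finite result bounded by $iW \leq nW$, and on this range $\ominus_i$ coincides with the standard $\ominus$ (top value $nW$). Therefore $f \restr X$ is a consistent progress measure for $\Gamma \restr X$ under the ordinary semantics with $\top = nW$. By the correctness of the progress-measure characterisation of~\cite{brim2011}, a consistent progress measure with only finite values certifies that player~1 wins the threshold mean-payoff objective from every vertex of $\Gamma \restr X$; since player~2 cannot escape $\Gamma \restr X$, such a winning strategy is also winning in $\Gamma$. Thus $X \subseteq W_1(\phi)$.

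For completeness, I would note that the doubling schedule $i \leftarrow \min(2i, n)$ guarantees that the final iteration of the loop runs with $i = n$, i.e.\ with top value $nW$ and $\ominus_n = \ominus$. At this value the static algorithm of~\cite{brim2011} computes the \emph{exact} winning region $W_1(\phi) = \{v \mid f(v) \neq \top\}$. Hence if no iteration produced a non-empty set, then in particular $\{v \mid f(v) \neq \top\} = \emptyset$ at $i = n$, so $W_1(\phi) = \emptyset$ and returning $\emptyset$ is correct. The main obstacle is the second step of the soundness argument: one must verify carefully that restricting the computed measure to $X$ yields a legitimate consistent progress measure under the standard $\ominus$—in particular that no consistency inequality is broken by the change of top value—so that the black-box correctness result of~\cite{brim2011} applies to the subgame $\Gamma \restr X$.
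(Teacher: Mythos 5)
Your proof is correct, and it shares the paper's two-part skeleton: soundness when a non-empty $X$ is returned, and completeness via the observation that the doubling schedule $i \leftarrow \min(2i,n)$ eventually runs the static algorithm with $i=n$, i.e.\ with the original $\ominus$, whose output is the exact winning region. The completeness halves of the two proofs are essentially identical. The difference lies in the soundness half. The paper disposes of it in one line, ``by the correctness (shown in~\cite{brim2011}) of the static algorithm used in step~5, $X\subseteq W_1(\phi)$'' --- but the algorithm executed there is \emph{not} the algorithm whose correctness is proved in~\cite{brim2011}: it uses $\ominus_i$ with truncated top value $iW < nW$, and the cited correctness theorem is stated for top value $nW$. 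What you supply is exactly the missing bridge, namely the claim the paper asserts without proof in the main text (``with this reduced value for the top element, if a winning region is identified for player~1, then it is a subset of the whole winning region''): (i)~the set $X=\{v \mid f(v)\neq\top\}$ contains, for each player-1 vertex, a witnessing successor inside $X$, and is closed under \emph{all} player-2 moves, so $\Gamma \restr X$ is a well-defined subgame that player~2 cannot leave; (ii)~on the finite values occurring in $X$, the truncated operation $\ominus_i$ coincides with the standard $\ominus$, so $f$ restricted to $X$ is a genuine consistent progress measure, with no $\top$ values, for $\Gamma \restr X$ under the standard semantics; (iii)~the soundness direction of the progress-measure characterization of~\cite{brim2011} then certifies that player~1 wins everywhere in $\Gamma \restr X$, and this lifts to $\Gamma$ precisely because player~2 cannot escape $X$. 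Your version buys a self-contained and rigorous argument where the paper relies on the reader to reconstruct this closure-and-coincidence reasoning; the paper's version buys only brevity.
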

\begin{proof}
Let Algorithm~\ref{alg:winfast} return a progress measure $f$
for $\Gamma$ at line 5 and the set $X$ is not empty. 
By the correctness (shown in~\cite{brim2011}) 
of the static algorithm used in step~5, $X\subseteq W_1(\phi)$.
Assume now that Algorithm~\ref{alg:winfast} returns $\emptyset$.
Because $i$ will at some point be greater or equal to $n$,
note that the original static algorithm is then executed at line~5. Again by its correctness we get that
there are no winning vertices and thus $\emptyset$ is the correct result.
\end{proof}

\begin{lemma}[Running Time]
Algorithm~\ref{alg:winfast} returns a winning set $X$ in $\O(|X| \cdot m \cdot W)$
or $\emptyset$ in $\O(n \cdot m \cdot W)$.
\end{lemma}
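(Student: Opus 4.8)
The plan is to analyze the two cases separately, corresponding to whether a nonempty winning set is returned or the empty set is returned, and to exploit the geometric doubling of the top value $\top_i$ together with the fact that the static progress-measure algorithm with top value $\alpha$ runs in time $\O(\alpha \cdot m)$.

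First I would treat the case where Algorithm~\ref{alg:winfast} returns a nonempty set $X$ at some iteration with top value $\top_i = \alpha$. Since the top value doubles each iteration starting from $\top_0 = W$, the total time spent across all iterations up to and including the one returning $X$ is bounded by the geometric sum $m \cdot (\alpha + \alpha/2 + \alpha/4 + \cdots) = \O(\alpha \cdot m)$. The crux is then to relate $\alpha$ to $|X|$. I would invoke the structural claim (item~3 of the correctness/running-time sketch): if $Z$ is a set in which no player-2 vertex has an edge leaving $Z$ and $\Gamma \restr Z$ is entirely winning for player~1, then a progress measure with top value $|Z| \cdot W$ suffices to identify $Z$. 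Contrapositively, if no winning set was found at top value $\alpha/2$, then any winning region must have size exceeding $\frac{\alpha/2}{W} = \frac{\alpha}{2W}$, so the set $X$ found at value $\alpha$ satisfies $|X| \geq \alpha/(2W)$, i.e.\ $\alpha \leq 2W \cdot |X|$. Substituting into the geometric-sum bound yields total running time $\O(\alpha \cdot m) = \O(|X| \cdot m \cdot W)$, as required.

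Second I would handle the case where the algorithm returns $\emptyset$. Here the loop runs until $i$ reaches $n$ (the top value caps at $n \cdot W$ via the update $i = \min(i \cdot 2, n)$ in the pseudocode, equivalently $\top_{i+1} \geq 2nW$ as a stopping condition), and the dominant cost is the final iteration at top value $\Theta(nW)$, which takes $\O(nW \cdot m)$ time; by the same geometric-sum argument the earlier iterations contribute only a constant factor more, so the total is $\O(n \cdot m \cdot W)$. Correctness of returning $\emptyset$ in this case follows from Lemma~\ref{lem:corrfast}, since at the final iteration the original static algorithm with the full top value $nW$ is run and correctly reports that the winning region is empty.

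The main obstacle I anticipate is making item~3 — the claim that a winning set $Z$ closed under player-2 moves is detected at top value $|Z| \cdot W$ — fully rigorous, since it is precisely the lemma that converts the \emph{algorithmic} quantity $\alpha$ into the \emph{combinatorial} quantity $|X|$ and thereby drives the whole bound. This requires arguing that within the closed winning subgame $\Gamma \restr Z$, player~1 has a winning strategy whose induced progress measure never needs to exceed $|Z| \cdot W$ before stabilizing below $\top$; concretely, one bounds the progress measure values along a winning strategy by the length and weight of cycles confined to $Z$, using $|Z|$ as the relevant vertex count rather than $n$. Everything else — the geometric summation and the case split — is routine once this monotonicity-and-locality lemma is in hand.
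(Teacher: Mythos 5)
Your proof is correct and follows essentially the same route as the paper's: the geometric sum over the doubled top values, the key structural claim that a player-2-closed winning subgame $Z$ is identified at top value $|Z| \cdot W$, the contrapositive step yielding $|X| \geq \alpha/(2W)$ and hence $\O(|X| \cdot m \cdot W)$, and the capped final iteration at top value $n \cdot W$ for the empty case. The one ``obstacle'' you flag --- making the structural claim fully rigorous --- is likewise left as a one-sentence assertion in the paper's proof, so your proposal matches it step for step.
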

\begin{proof}
If Algorithm~\ref{alg:winfast} terminates at line~8 in iteration $i \leq n$ returning $X$,
the total running time until this iteration is $m\cdot W \cdot ( i + i/2 + i/4 + \cdots)$ because using
the static algorithm with $\ominus_i$ requires time $\O(i \cdot m \cdot W)$. 
Thus, when a set $X$ is returned in iteration $i$, it requires time $\O(i \cdot m \cdot W)$.
Let $Z$ be a set of vertices such that no player-2 vertex in $Z$ has an edge out of $Z$,
and the whole subgame $\Gamma\restr Z$ is winning for player~1. Then a winning strategy in $Z$ ensures that
a progress measure with $i = |Z|$ would identify the set $Z$ as a winning set.
From our assumption that Algorithm~\ref{alg:winfast} terminates at $i \leq n$ we know that no 
winning set was identified when $i$ had value $i/2$. Thus the returned set $X$ had size greater than 
$i/2$. Therefore when a set $X$ is returned in iteration $i$, it requires time $\O(|X| \cdot m \cdot W)$. 
If Algorithm~\ref{alg:winfast} terminates at line~12 returning $\emptyset$ we have a runtime $\O(n \cdot m
\cdot W)$ as $i$ was $n$ in the last iteration.
\end{proof}

The last two lemmas yield Theorem~\ref{thm:winningset}.

\section{Details of the Mean-Payoff coBüchi Algorithm}\label{app:mpcobuchi}
Algorithm~\ref{alg:cobuchi} is the new algorithm for threshold mean-payoff coBüchi games, whose
correctness is the same as the correctness of the basic algorithm for threshold
mean-payoff coBüchi games. Using Theorem~\ref{thm:winningset} for line~6 we obtain that the running time is
$\O(n\cdot m\cdot W)$ and hence obtain Theorem~\ref{thm:cobuchi}.

\begin{algorithm}
\caption{SolveMeanPayoffcoBüchi}\label{alg:cobuchi}
\KwIn {A mean-payoff coBüchi game $\Gamma = ((V,E,\langle V_1, V_2 \rangle,w, p)$ and objectives $\phi =
		\MeanPayoff(\V) \cap \Parity(p)$}
\KwOut {The winning region of player~1 in $\Gamma$, i.e., $W_1(\phi)$.}
\Begin{
$i \leftarrow 0$, $B_i \leftarrow V$\;
\Repeat{$B_i = B_{i+1}$}{
    $Y_i \leftarrow \attr_2(B_i \cap p^{-1}(1))$\;
    $X_i \leftarrow B_i \setminus Y_i$\;
    Compute a subset of the winning region $Z_i \subseteq W_1(\phi)$ for player~1 in $\Gamma\restr
    X_i$ where $\phi = \MeanPayoff(\V)$ is the threshold mean-payoff
    objective.\;
    $B_{i+1} \leftarrow B_i \setminus \attr_1(Z_i)$\;
    $i \leftarrow i+1$
}
\Return $V \setminus B_i$
}
\end{algorithm}

\section{Details of the Mean-Payoff Parity Algorithm}\label{app:mpp}
We recall the algorithm for mean-payoff parity games~\cite{CJH05}, 
and present the relevant details for the sake of being self-contained. 
Algorithm~\ref{alg:mpp} is the detailed pseudocode, and we present a succinct correctness proof.

\begin{algorithm}
\caption{SolveMeanPayoffParity}\label{alg:mpp}
\KwIn {A mean-payoff parity game $\Gamma = ((V,E,\langle V_1, V_2 \rangle,w, p)$ and objectives $\phi =
		\MeanPayoff(\V) \cap \Parity(p)$}
\KwOut {The winning region of player~1 in $\Gamma$, i.e., $W_1(\phi)$.}
\Begin{
\lIf {$V = \emptyset$}{\Return $\emptyset$}
$k \leftarrow \min\{p(v)  \mid  v \in V \}$, $i \leftarrow 0$, $A_0 \leftarrow V$\;
	\uIf{$k$ is even}{
        \lIf{$p$ has two priorities}{solve the mean-payoff Büchi game $\Gamma$
        with objectives $\phi$.}
	\Repeat{$A_i = A_{i+1}$}{
                $Y_i \leftarrow \attr_1(A_i \cap p^{-1}(k))$\;
                $C_i \leftarrow (A_i \setminus Y_i)$\;
		$X_i \leftarrow C_i\setminus \solvempp(\Gamma \restr C_i , \phi))$\;
		\uIf{$X_i \neq \emptyset$}{
                    $A_{i+1} \leftarrow A_i\setminus \attr_2(X_i)$\;
                    }\Else{
                        $U_i \leftarrow W_1(\varphi)$ in $\Gamma \restr A_i$,
                            where $\varphi = \MeanPayoff(\V)$\\
                        $X_i \leftarrow A_i \setminus U_i$.\\
                        $A_{i+1} = A_i \setminus \attr_2(X_i)$

                }
        $i \leftarrow  i+1$\;
	}
	\Return $A_i$\;
	} \ElseIf{$k$ is odd} {
        \lIf{$p$ has two priorities}{solve the mean-payoff Büchi game $\Gamma$
            with objectives $\phi$.}
            $B_0 \leftarrow V$\;
            \Repeat{$B_i = B_{i+1}$}{
                $Y_i \leftarrow \attr_2(B_i \cap p^{-1}(k))$\;
                $X_i \leftarrow B_i\setminus Y_i$\;
                $Z_i \leftarrow \solvempp(\Gamma \restr X_i, \phi) $\;
                $B_{i+1} \leftarrow B_i \setminus \attr_1(Z_1)$\;
                $i \leftarrow i+1$\;
            }
            \Return $V \setminus B_i$\;
	}

}
\end{algorithm}

\begin{lemma}[Correctness]
Given game graph $\Gamma = (V, E, \langle V_1, V_2 \rangle, w, p)$ 
with objectives $\phi = \MeanPayoff(\V) \cap \Parity(p)$,
Algorithm~\ref{alg:mpp} correctly computes the set $W_1(\phi)$ for $\Gamma$.
\end{lemma}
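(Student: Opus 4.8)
The plan is to argue by induction on the number of priorities $d$. The base case $d\le 2$ is exactly a mean-payoff Büchi or mean-payoff coBüchi game, whose correctness is supplied by Theorem~\ref{thm:buchi} and Theorem~\ref{thm:cobuchi}; these are the short-circuit branches that solve the two-priority game directly. For the inductive step I would split on the parity of the minimum priority $k$, exactly as the algorithm does, and in each case maintain a loop invariant together with an explicit strategy construction at the fixpoint. Throughout, the recursive calls $\solvempp(\Gamma\restr C_i,\phi)$ and $\solvempp(\Gamma\restr X_i,\phi)$ are applied to subgames in which the priority-$k$ vertices have been removed, so they carry strictly fewer priorities and the induction hypothesis applies.

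For the even case ($k$ even) the invariant I would maintain is $W_1(\phi)\subseteq A_i$, i.e. every removed vertex is losing for player~1. A set is removed in two ways. When $X_i=C_i\setminus\solvempp(\Gamma\restr C_i,\phi)$ is non-empty, the recursive call is correct by the induction hypothesis because $C_i=A_i\setminus Y_i$ contains no priority-$k$ vertex (all of them lie in the player-1 attractor $Y_i$); thus player~2 wins $\overline{\phi}$ on $X_i$ inside $\Gamma\restr C_i$. Since $Y_i$ is a player-1 attractor, player~2 has no edges leaving $C_i$, so this strategy survives in the whole game once we adjoin $\attr_2(X_i)$. When instead $X_i=\emptyset$, the algorithm removes the player-2 attractor of $A_i\setminus U_i$, where $U_i=W_1(\MeanPayoff(\V))$ in $\Gamma\restr A_i$; here $A_i\setminus U_i$ is losing even for the weaker objective $\MeanPayoff(\V)$, hence also for $\phi$, and the same attractor argument extends this to the whole game.

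The heart of the even case is soundness at the fixpoint: $A_{i+1}=A_i$ forces the else-branch with $X_i=\emptyset$, so $U_i=A_i$ and moreover $\solvempp(\Gamma\restr C_i,\phi)$ returned all of $C_i$. Thus player~1 wins the pure mean-payoff objective everywhere on $A_i$, wins $\phi$ everywhere on $C_i=A_i\setminus Y_i$, and can attract to the even priority-$k$ vertices from all of $Y_i$. I would combine these into one winning strategy by the ``play longer and longer'' scheme already sketched for the Büchi case: play the mean-payoff strategy for a long stretch, then run the attractor strategy to a priority-$k$ vertex, and repeat with geometrically growing stretch lengths. Parity holds because the minimum priority $k$ is even and is seen infinitely often; the threshold holds because each attractor excursion has bounded length and hence bounded reward deviation, so its density among the first $n$ steps tends to $0$ and the $\liminf$ of the running averages still meets $\V$. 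Verifying that this interleaving keeps the long-run average at least $\V$ is the main obstacle, as it is the one point where the qualitative and quantitative objectives genuinely interact.

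The odd case ($k$ odd) is dual and I would treat it symmetrically, shrinking $B_i$ towards the player-2 winning region with invariant $W_2(\overline{\phi})\subseteq B_i$ and returning $V\setminus B_i$. Each iteration removes $\attr_1(Z_i)$, where $Z_i=\solvempp(\Gamma\restr X_i,\phi)$, $X_i=B_i\setminus Y_i$, and $Y_i=\attr_2(B_i\cap p^{-1}(k))$; since $Y_i$ is a player-2 attractor, all player-1 edges out of $X_i$ pass through player-2 vertices, so the player-1 winning strategy on $Z_i$ extends to its player-1 attractor in the whole game, and the induction hypothesis applies to $\Gamma\restr X_i$ as it has no priority-$k$ vertex. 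At the fixpoint $\solvempp$ returns $Z_i=\emptyset$, and I would exhibit a player-2 strategy that falsifies $\phi$ on $X_i$ and plays the attractor strategy to the odd priority-$k$ vertices on $Y_i$: any resulting play either visits $Y_i$ infinitely often, so the least priority seen infinitely often is the odd $k$ and parity fails, or it eventually remains in $X_i$, where player~2 already wins $\overline{\phi}$. This certifies that $V\setminus B_i$ equals $W_1(\phi)$, completing the induction.
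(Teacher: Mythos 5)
Your overall plan coincides with the paper's proof: induction on the number of priorities, base case $d=2$ via Theorem~\ref{thm:buchi} and Theorem~\ref{thm:cobuchi}, a case split on the parity of the minimal priority $k$, and a round-based strategy that interleaves attraction to priority-$k$ vertices with ever longer stretches of a pure mean-payoff strategy. You in fact argue more than the paper does: its proof only establishes soundness (every vertex in the returned set is winning for player~1), whereas you also maintain the completeness invariants $W_1(\phi)\subseteq A_i$ and $W_2(\overline{\phi})\subseteq B_i$. However, there is a concrete error in how you use attractor complements. In the even case, $C_i=A_i\setminus Y_i$ with $Y_i=\attr_1(A_i\cap p^{-1}(k))$, so the trap property goes the other way around: every \emph{player-1} vertex of $C_i$ has all its successors inside $C_i$ (otherwise it would have been absorbed into the attractor), whereas a \emph{player-2} vertex of $C_i$ may well have edges into $Y_i$ (it only needs one successor in $C_i$ to stay outside $\attr_1$). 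Your claim ``since $Y_i$ is a player-1 attractor, player~2 has no edges leaving $C_i$'' is therefore false. The step it was meant to justify is still true, but for the opposite reason: player~2's spoiling strategy on $X_i$ in $\Gamma\restr C_i$ survives in $\Gamma\restr A_i$ because \emph{player~1} cannot leave $C_i$, so every play consistent with that strategy remains a play of the subgame. The same confusion, with the roles garbled, appears in your odd case (``all player-1 edges out of $X_i$ pass through player-2 vertices''); the correct fact there is that player-2 vertices of $X_i=B_i\setminus\attr_2(B_i\cap p^{-1}(k))$ have no edges leaving $X_i$, which is what lets player~1's subgame strategy on $Z_i$ survive.

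This error propagates into your even-case fixpoint argument. Precisely because player~2 \emph{can} exit $C_i$ into $Y_i$, the induction hypothesis for $\Gamma\restr C_i$ does not transfer verbatim to $\Gamma\restr A_i$, and your strategy loop (``play the mean-payoff strategy for a long stretch, then run the attractor strategy to a priority-$k$ vertex, and repeat'') is not even defined when the long stretch ends inside $C_i$: no attractor strategy to $p^{-1}(k)$ is available from $C_i$. The combined strategy must do what the paper's proof does explicitly: whenever the play is in $C_i$, switch to the subgame mean-payoff parity strategy, which wins by the induction hypothesis \emph{as long as the play stays in} $C_i$; if player~2 ever uses an escape edge into $Y_i$, revert to attracting to the even priority $k$ and begin a new, longer round. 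Parity then holds because either $k$ is visited infinitely often or the tail of the play stays in $C_i$, and the mean-payoff bookkeeping (which you rightly flag as the main obstacle) must cover both alternation patterns. With the trap direction corrected and this missing case added, your argument goes through and essentially matches the paper's.
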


\begin{proof}
We proceed by induction on the number of priorities $d$.
For the base cases, i.e. when $d = 2$ we can use Theorem~\ref{thm:buchi}
and Theorem~\ref{thm:cobuchi} to receive the winning set $W_1(\phi)$ for $\Gamma$.
Assume now that for $d \leq k$ Algorithm~\ref{alg:mpp} correctly returns
$W_1(\phi)$ for $\Gamma$.
For the induction step, assume that we have $d=k+1$ priorities. We need to show
that we will correctly return $W_1(\phi)$ for $\Gamma$. Therefore we make a case
distinction whether the smallest priority is even or odd.
\begin{itemize}
	\item Assume the smallest priority $k$ in $\Gamma$ is even and thus $v$
        is in the set returned by line 18.
	By the construction of the algorithm, we have two cases:
	\begin{itemize}
		\item Either $v$ is in $Y_i$ (line~7),
		\item or $v$ is in the mean-payoff parity  
                winning set of the game $\Gamma \restr C_i$ for some $i$ (line~9).
	\end{itemize}
        Note that if $v$ is in $Y_i$ we can ensure $\MeanPayoff(\V)$ 
        because we remove every vertex not sufficing the objective in
        lines~13-15. We thus have a strategy $\sigma_{mp}$ ensuring
        $\MeanPayoff(\V)$.
	It remains to argue why we can in both cases win the mean-payoff parity game. 
        If $v$ is in the mean-payoff parity winning set obtained by the
        recursive call (line~9) we have a strategy by the induction hypothesis.
        If $v$ is in $Y_i$ we will propose a strategy for player-1 starting from $v$.
        The strategy will be played in rounds $1,2,\dots$. In round $i$ we will
        play the following strategy:
        Because $v$ is in the player-1 attractor of 
        $p^{-1}(k)$ we can visit $k$.
	This could mean that we accumulate (in the worst case) up to $-(n-1)W$ credits for the mean-payoff
	objective. If we end up in the mean-payoff parity winning set, we win by the induction hypothesis.
	If we are still in the player-1 attractor set $Y_i$, 
        we play $\sigma_{mp}$ for $i$ steps which will ensure the mean-payoff conditions
        as $i \rightarrow \infty$. 
	After playing $\sigma_{mp}$ for $i$ steps we can end up (i) again in 
        the player-1 attractor set $Y_i$, enabling us to visit a vertex of priority $k$, or  
        (ii) in the mean-payoff parity winning set where we win by
        induction hypothesis. 
	\item Assume the smallest priority $k$ in $\Gamma$ is odd. Thus $v$ must have been in the set 
	returned by line 29. It must be that $v$ is in some $Z_i$ or some
        player-1 attractor to it. If $v$ is in $Z_i$ we win the game by the
        induction hypothesis. Otherwise if we are in a
	player-1 attractor to $Z_i$ we will use the strategy induced by the attractor to reach $Z_i$.
\end{itemize}
\end{proof}

\begin{lemma}[Running Time]
The worst case complexity of Algorithm~\ref{alg:mpp} is $\O(n^{d-1} \cdot m \cdot W)$.
\end{lemma}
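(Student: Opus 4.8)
The plan is to prove the claimed bound by setting up and solving the recurrence
$T(n,d,m,W) = n\cdot(T(n,d-1,m,W)+\O(m)) + \O(n\cdot m\cdot W)$ already announced before the algorithm. First I would argue that each of the two \texttt{Repeat}-loops (the even-priority branch returning at line~18 and the odd-priority branch returning at line~29) runs for at most $\O(n)$ iterations. In the even branch, whenever the loop does not yet terminate we have $\attr_2(X_i)\neq\emptyset$, so $A_{i+1}\subsetneq A_i$ and at least one vertex is permanently deleted; likewise in the odd branch $B_{i+1}\subsetneq B_i$ whenever $\attr_1(Z_i)\neq\emptyset$. Since a terminating iteration ($A_i=A_{i+1}$, resp.\ $B_i=B_{i+1}$) is reached as soon as no vertex is removed, there are at most $n$ iterations in either loop.

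Next I would bound the work inside a single iteration, excluding the threshold mean-payoff computation. Each iteration performs a constant number of attractor computations ($\attr_1(A_i\cap p^{-1}(k))$ and $\attr_2(X_i)$ in the even case, $\attr_2(B_i\cap p^{-1}(k))$ and $\attr_1(Z_i)$ in the odd case), each of which is linear, i.e.\ $\O(m)$. Each iteration also issues exactly one recursive call $\solvempp(\Gamma\restr C_i,\phi)$ (even) or $\solvempp(\Gamma\restr X_i,\phi)$ (odd) on a subgame in which all vertices of the minimal priority $k$ have been removed, so the recursion operates on a game with $d-1$ priorities and costs at most $T(n,d-1,m,W)$. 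Summed over the $\O(n)$ iterations this contributes the $n\cdot(T(n,d-1,m,W)+\O(m))$ term.

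The crux, and the step that actually yields the improved bound, is accounting for the threshold mean-payoff computations $U_i=W_1(\MeanPayoff(\V))$ in $\Gamma\restr A_i$ that occur in the even branch. Computing each such winning region from scratch would cost $\O(n\cdot m\cdot W)$ and, over $\O(n)$ iterations, inflate the bound by a factor $n$. Instead I would observe that across the whole loop the only vertices ever removed from $A_i$ are the player-2 attractors $\attr_2(X_i)$. Hence the sequence of games $\Gamma\restr A_1,\Gamma\restr A_2,\dots$ on which the threshold mean-payoff winning regions are queried is exactly a sequence of vertex-set deletions together with their player-2 attractors---precisely the input model of the decremental algorithm of Section~\ref{sec:decr}. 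Invoking Theorem~\ref{thm:decr}, the entire sequence of winning-region queries over all iterations of the loop is computed in a total of $\O(n\cdot m\cdot W)$ time, which is the source of the additive $\O(n\cdot m\cdot W)$ term rather than a multiplicative factor. (In the odd branch no threshold mean-payoff computation occurs, so this term is needed only for the even branch.) This amortization is the main obstacle; everything else is routine once the decremental machinery is in place.

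Finally I would solve the recurrence. The base case $d=2$ is handled by the mean-payoff Büchi and coBüchi subroutines, which by Theorem~\ref{thm:buchi} and Theorem~\ref{thm:cobuchi} run in $T(n,2,m,W)=\O(n\cdot m\cdot W)$. Absorbing the $\O(nm)$ attractor cost into the $\O(n\cdot m\cdot W)$ term, the recurrence simplifies to $T(n,d,m,W)=n\cdot T(n,d-1,m,W)+\O(n\cdot m\cdot W)$. Unrolling gives $T(n,d,m,W)=n^{d-2}\,T(n,2,m,W)+\O(n\cdot m\cdot W)\sum_{j=0}^{d-3}n^{j}=n^{d-2}\cdot\O(n\cdot m\cdot W)+\O(n^{d-2}\cdot m\cdot W)=\O(n^{d-1}\cdot m\cdot W)$, as claimed.
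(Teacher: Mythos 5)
Your proposal is correct and follows essentially the same route as the paper: both set up the recurrence $T(n,d,m,W)=n\cdot(T(n,d-1,m,W)+\O(m))+\O(n\cdot m\cdot W)$, both observe that the threshold mean-payoff queries in the even branch can be served by the decremental algorithm of Theorem~\ref{thm:decr} (valid since only player-2 attractors are removed from the $A_i$), making that cost an additive $\O(n\cdot m\cdot W)$ term rather than a per-iteration cost, and both close with the Büchi/coBüchi base case from Theorems~\ref{thm:buchi} and~\ref{thm:cobuchi}. Your write-up is in fact more careful than the paper's at the final step, where the paper's solved form of the recurrence is stated somewhat garbled while your unrolling to $\O(n^{d-1}\cdot m\cdot W)$ is clean.
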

\begin{proof}
Let $T(n,d,m,w)$ be the complexity of Algorithm~\ref{alg:mpp}. Since every
recursive call removes at least one state from $A_i$ and since the number of
priorities decrease in a recursive call we get the following recurrence
relation: $T(n,d,m,w) = n(T(n,d-1,m,w) + \O(m)) + \O(nmW) = 
n(T,n,d-1,m,w) + \O(nm) + \O(nmW)$. Note that $\O(m)$ is used to calculate the
attractors. We only get $\O(nmW)$ once every iteration, because we can use the
decremental algorithm introduced in Section~\ref{sec:decr} to calculate the
mean-payoff objectives in line~13. Note that this is particularly possible because only
player-2 attractors get removed, thus ensuring the input condition for the
algorithm. We can thus simplify to $T(n,d,m,w) = 
n(T,n,d-1,m,w) + \O(nmW)$. Also note that when we have the base case $T(n,2,m,w)$
we can solve the problem using Theorem~\ref{thm:buchi} and
Theorem~\ref{thm:cobuchi} in $\O(n\cdot m \cdot w)$. Solving the recurrence
relation yields $T(n,d,m,w) = n^{d-1} + (d-1) \cdot  \O(n\cdot m \cdot W)$ which
concludes the proof.
\end{proof}

The last two lemmas yield Theorem~\ref{thm:tmpp}.


\end{document}